\newcommand{\C}[1]{\mbox{\lstinline`#1`}}
\definecolor{dkblue}{rgb}{0,0.1,0.5} 
\definecolor{lightblue}{rgb}{0,0.5,0.5} 
\definecolor{dkgreen}{rgb}{0,0.4,0} 
\definecolor{dk2green}{rgb}{0.4,0,0} 
\definecolor{dkviolet}{rgb}{0.6,0,0.8}
\definecolor{shadethmcolor}{rgb}{0.9, 0.9,1}
\newcommand{\ynote}[1]{\todo[inline, color=green!40]{#1}}
\newcommand{\knote}[1]{\todo[inline, color=blue!20]{#1}}
\newcommand{\mybigcap}{\ensuremath{\mathlarger{\mathlarger{\mathlarger{\cap}}}}}
\newcommand{\mybigcup}{\ensuremath{\mathlarger{\mathlarger{\mathlarger{\cup}}}}}
\newcommand{\Nat}{\mathbb{N}}
\begin{document}
\title{Coinductive Uniform Proofs\vspace*{-0.2in}}
\author{E. Komendantskaya\inst{1} and Y. Li\inst{1} \institute{Heriot-Watt University, Edinburgh,
  Scotland, UK}} 

\maketitle

\begin{abstract}
  Coinduction occurs in two guises in Horn clause logic: 
	in proofs of circular properties and relations, and in proofs involving construction of infinite data.
	Both instances of coinductive reasoning appeared in the literature before, but a systematic analysis of 
	these two kinds of proofs and of their relation was lacking. 
  We propose  a  general proof-theoretic framework for  
	handling both kinds of coinduction arising in Horn clause logic.
	To this aim, we propose a coinductive extension of Miller et al's framework of \emph{uniform proofs}  and prove its soundness
	relative to coinductive models of Horn clause logic.

        This paper is a preliminary version of~\cite{BKL19} and contains a simpler version of the soundness proof that appeared later in~\cite{BKL19}.
        %

\textbf{Keywords:} Horn Clause Logic, Coinduction, Uniform Proofs. 
\end{abstract}

\section{Introduction}\label{sec:intro}

 Horn clause logic is a fragment of predicate logic, in which all formulae are written in clausal form. For example, the two clauses below inductively define a list membership property: 

\vspace*{0.05in}

\noindent
(1) $\forall x\ y\ (member\ x \ [x |y])$\\ 
(2) $\forall x\ y\ t \ (member\ x\ t \;\supset\; member \ x\ [y | t])$

\vspace*{0.05in}

\noindent We use notation $[\_ | \_]$ for list and stream constructors throughout this paper.
Generally,  a Horn clause has a shape $\forall x_1 \ldots x_m\, (A_1 \land \ldots  \land A_n \supset A)$, where each $A_i$ and $A$
are atomic formulae.
The clause (1) above should be read as having an empty antecedent. 
This syntax implicitly restricts the syntax of a \emph{goal clause (or goal)}, i.e. a clause with empty consequent.
Firstly, all goals are existentially quantified.
Secondly, they may only contain conjunctions.  
 Thus we can have as a goal 
e.g. $\exists x\ (member \ 0 \ [0 | 1 | x] \land member \ 1 \ [0 | 1 | x])$, and prove it with $x = nil$, 
but not an implicative or a universally quantified formula, e.g. not $\forall x\ (member \ 0 \ [0 | 1 | x])$ or
$\forall x\ (member \ 0 \ [0 | 1 | x] \supset member \ 1 \ [0 | 1 | x])$.

Initially introduced by Horn in the 30s,  Horn clause logic gained popularity for 
yielding efficient proofs by resolution~\cite{EO93}, which made it the logic of  choice for  first implementations of Prolog in 70s and 80s~\cite{Llo87}, as well as several resolution-based provers in the 90s.
The next big step was made in the 90s by Miller, Nadathur et. al~\cite{MNPS91,MN12} who studied proof-theoretic properties of Horn clause logic, introducing the notion of a \emph{uniform proof}. 
Within that framework, three  extensions to Horn clause logic were suggested. 
Taking first-order Horn clauses (\emph{fohc}) as a starting point, 
they extended it with
lambda abstraction, obtaining \emph{higher-order Horn clauses} (\emph{hohc}). 
Allowing not only conjunctions, but also implications, disjunctions and quantified formulae in the antecedent of $\supset$ 
gave \emph{hereditary Harrop} logic, which, too can have first-order or higher order syntax resulting in \emph{first-order hereditary Harrop clause} logic (\emph{fohh}) 
and  \emph{higher-order hereditary Harrop clause} logic (\emph{hohh}). 
For example, we can express and prove  
$\forall x\ (member \ 0 \ [0 | 1 | x])$ and
$\forall x\ (member \ 0 \ [0 | 1 | x] \supset member \ 1 \ [0 | 1 | x])$ in \emph{fohh}, taking clauses (0) and (1) as axioms.
Figure~\ref{fig:diamonds} shows the ``uniform proof diamond" (the arrows show syntactic extensions).

\begin{figure}[t]
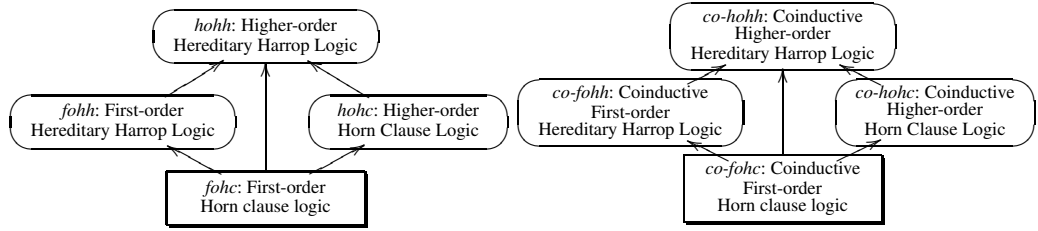

\scriptsize{
$$
\xy0;/r.03pc/: 
(0,0)*[o]=<85pt,20pt>\hbox{\txt{\emph{hohh}: Higher-order\\ Hereditary Harrop  Logic}}="d"*\frm<8pt>{-},
(-150,-90)*[o]=<85pt,20pt>\hbox{\txt{\emph{fohh}: First-order\\ Hereditary Harrop  Logic}}="c"*\frm<8pt>{-},
(150,-90)*[o]=<75pt,20pt>\hbox{\txt{\emph{hohc}: Higher-order\\ Horn Clause Logic }}="b"*\frm<8pt>{-},
(0,-170)*[o]=<75pt,20pt>\hbox{\txt{\emph{fohc}: First-order\\ Horn clause logic }}="a"*\frm<8pt>{-,},
"c";"d" **\dir{-} ?>*\dir{>},
"b";"d" **\dir{-} ?>*\dir{>},
"a";"d" **\dir{-} ?>*\dir{>},
"a";"c" **\dir{-} ?>*\dir{>},
"a";"b" **\dir{-} ?>*\dir{>}
\endxy
\ \ \ 
\xy0;/r.03pc/: 
(0,0)*[o]=<85pt,25pt>\hbox{\txt{\emph{co-hohh}: Coinductive \\ Higher-order\\ Hereditary Harrop  Logic}}="d"*\frm<8pt>{-},
(-160,-80)*[o]=<85pt,25pt>\hbox{\txt{\emph{co-fohh}: Coinductive \\ First-order\\ Hereditary Harrop  Logic}}="c"*\frm<8pt>{-},
(160,-80)*[o]=<75pt,25pt>\hbox{\txt{\emph{co-hohc}: Coinductive \\ Higher-order\\ Horn Clause Logic }}="b"*\frm<8pt>{-},
(0,-160)*[o]=<75pt,25pt>\hbox{\txt{\emph{co-fohc}: Coinductive \\ First-order\\ Horn clause logic }}="a"*\frm<8pt>{-,},
"c";"d" **\dir{-} ?>*\dir{>},
"b";"d" **\dir{-} ?>*\dir{>},
"a";"d" **\dir{-} ?>*\dir{>},
"a";"c" **\dir{-} ?>*\dir{>},
"a";"b" **\dir{-} ?>*\dir{>}
\endxy
$$}
\caption{\footnotesize{\textbf{Left: uniform proof diamond by Miller et al.} 
\textbf{Right: coinductive uniform proof diamond proposed in this paper}}}\label{fig:diamonds}
\end{figure}

To name a few most recent appearances of these logics in the literature, higher-order Horn clauses are gaining popularity in verification of functional programs and in refinement types~\cite{HashimotoU15,BurnOR17}; and first-order hereditary Harrop clause logic has been used in extensions of Haskell type classes~\cite{BottuKSOW17}.
There are of course $\lambda$Prolog/Teyjus~\cite{NadathurM99} 
and Abella~\cite{BaeldeCGMNTW14} influenced by uniform proofs, as well as an extensive SMT-solving literature on applications of Horn clauses in verification~\cite{BjornerGMR15}. 
The uniform proof diamond thus gives a useful proof-theoretic classification for a set of computationally tractable and interesting 
fragments of first-order and higher-order logics.

Coinductive properties of first-order Horn clause logic are not equally well understood proof-theoretically. Firstly, we can define coinductive (or infinite) data structures, such as streams or infinite trees, in this logic. For example an infinite stream of bits can be defined as follows:

\vspace*{0.05in}
\noindent 
(3) $\forall x\ y\ ((bitstream\ y \;\land\; bit\ x ) \;\supset\; bitstream [ x | y ] )$\\ 
(4) $bit(0)$ \\
(5) $bit(1)$

\vspace*{0.05in}

\noindent Given a \emph{suitable coinductive proof principle}, we can prove 
$\exists \ y (bitstream\ [0| y])$, substituting $y$ with a suitable recursive term. 
Some such derivations can be performed in CoLP~\cite{GuptaBMSM07,SimonBMG07}.

Secondly, first-order Horn clauses can be used to state and prove circular properties and relations.
These do not have to involve infinite data structures.
We may state the following circular property of the (finite) list membership: 

\vspace*{0.05in}
\noindent 
(6) $\forall x\ y\ t \ ((member\ x\ [y | t] \land eq \ x \ y) \;\supset\; member \ x\ [y | t])$\\
(7) $\forall x\ (eq\ x\ x)$
\vspace*{0.05in}

\noindent Again, given \emph{a suitable coinductive proof principle}, the goal $member(0, [0| nil])$ could be proven coinductively by using just the clauses (6) and (7), without inductively analysing the list structure as clauses (1) and (2) suggest. 
Interesting cases of such circular proofs were reported in implementation of Haskell type classes~\cite{LammelJ05,FKS15}.

Thridly, the existing literature tends to regard coinduction in Horn clause logic mainly operationally, as a process of loop detection. 
The question is thus two-fold: \emph{Can we formulate a coinductive proof principle for Horn clause logic, complementing the work  of Miller et al. on systematisation of proof-theoretic properties of Horn clause logic?} \emph{What would be the ``suitable coinductive proof principle" to cover both cases of coinduction?}
This question has not been raised or answered in the literature in this generality. And this is where this paper fills the gap.  

Suppose  we simply take \emph{fohc} and extend it with a standard coinductive proof rule \emph{\`{a} la}
 Coquand and Gimenez~\cite{Coq94,Gimenez98}.
Would this resulting coinductive calculus, lets call it \emph{co-fohc}, be suitable to prove coinductive properties of \emph{fohc} formulae?
-- It turns out not: we would be able to prove $member(0, [0| nil])$ given the clauses (6)-(7), but not $\exists \ y (bitstream\ [0| y])$ for clauses (3)-(5). 
To instantiate the existential quantifier,  we would need to define the stream of zeros using a fixed-point operator, and for this we need at least the language of \emph{hohc}. 

Next, suppose we take \emph{co-hohc}, i.e. a coinductive version of \emph{hohc}.
Would it give us a suitable coinductive calculus?  -- The answer is again negative. It would cover our earlier examples, but not the following example, defining an infinite stream of successive numbers:

\vspace*{0.05in}
\noindent 
(8) $\forall x\ y\ ( from\ (s\ x)\ y \;\supset\; from\ x\ [ x\ | y ] )$
\vspace*{0.05in}

\noindent Given $\exists y \ (from\ 0\ y)$ as a goal, $[0\ |\ (s\ 0)\ |\ (s\ s\ 0) \ |\ \ldots]$ would be an intended answer for $y$.
For this clause, we cannot prove  $\exists \ y\ (from\ 0\ y)$ directly, but we have to prove a more general lemma first, i.e.
$\forall \ x\ (from\ x\ (fr\_str\ x))$, where $fr\_str\ x$ is a fixed point definition of a stream of successive numbers starting with $x$. 
But, by the earlier discussion, this universal goal does not belong to \emph{hohc}! It is a goal of hereditary Harrop clause logic, that is, we need at least the syntax and rules of \emph{hohh}, with additional coinductive rule (i.e. we need a calculus \emph{co-hohh}).

To complete the picture with a fourth calculus, we consider an example of a coinductive property whose proof falls within the remits of \emph{co-fohh}.
Suppose we are given a definition of a comembership property, i.e. a property that an element occurs in a stream infinitely many times\footnote{It can be defined for example as follows:
 $\forall\ x\ t\ (drop\ x\ [x \ | t]\ t)$\\
 $\forall\ x\ t \ s\ (drop\ x\ t\ s\ \;\supset\; drop\ x \ [\_ \ | \ t] \ s) $\\
 $\forall\ x\ t \ s\ ((drop\ x\ t\ s\ \land comember\ x\ s) \;\supset\; comember\ x \ t)$}.
Suppose now a function $f$ is defined to take streams of bits as input and return streams of bits as an output. 
Therefore, its application would not affect the property of a bit comembership:

\vspace*{0.05in}
\noindent 
(9) $\forall x\ s\ ( (comember_{bit} \ x\ (f s) \land bit \ x ) \;\supset\; comember_{bit} \ x \ s )$
\vspace*{0.05in}

\noindent Again, using a suitable coinductive proof principle, we could  prove the lemma $\forall x\ s\ ( bit\ x \;\supset\; comember_{bit} \ x \ s )$.
This lemma would belong to \emph{co-fohh}, as its syntax is expressed in first-order hereditary Harrop clause logic. A \emph{fohc} version of the lemma, e.g. $comember_{bit} \ 0 \ str$ is not provable by coinduction (whatever $str$ is), but would follow from $\forall x\ s\ ( bit\ x \;\supset\; comember_{bit} \ x \ s )$ if we prove that first. Many more examples of this kind of coinductive reasoning are given in~\cite{FKS15}. 

In Section~\ref{sec:cup}, we introduce these four calculi formally and thereby establish a \emph{uniform proof} framework for systematic study of coinduction that arises in Horn clause logic (cf. Figure~\ref{fig:diamonds}). 
%
%
%
We show that these four calculi characterize four different classes of coinductive properties arising from Horn clause definitions: \emph{co-fohc} and \emph{co-fohh} are required when proofs do not involve construction of infinite data, whereas  
\emph{co-hohc} and \emph{co-hohh} are needed for proofs involving infinite data construction. These cases further sub-divide: \emph{co-fohc} and \emph{co-hohc} are suitable for regular proofs of atomic coinductive properties (like $member(0, [0| nil])$).  \emph{Co-fohh} and \emph{co-hohh} will be needed to construct proofs that exhibit irregularity 
and require more general coinductive hypotheses
(like $\forall \ x\ (from\ x\ (fr\_str\ x))$ or $\forall x\ s\ ( bit\ x \;\supset\; comember_{bit} \ x \ s )$).
As if by magic, Miller et al's  ``uniform proof diamond"
happens to give a perfect basis for systematic description of coinductive proofs and properties 
arising in Horn clause logic. 


Let us call a set of  first-order Horn clauses a \emph{logic program}. 
In Section~\ref{sec:sound} we show that, given a logic program $P$, the proofs obtained for $P$ in \emph{co-fohc, co-fohh, co-hohc, co-hohh} are coinductively sound, i.e. sound relative to the \emph{greatest Herbrand} model of $P$. We also show that it is coinductively sound to use coinductively proven lemmas in other proofs, as was illustrated  in the final example of this section.

First-order Horn clause logic is Turing complete and therefore any sound calculus for it will fail to decide some recursive cases. 
In Section~\ref{sec:rw}, we discuss examples  of logic programs that define coinductive properties
beyond the power of our ``coinductive diamond". 

The uniform proofs in general, and their coinductive version in particular, are not designed to compete in expressivity or power with richer logics, such as e.g. the calculus of (coinductive) constructions~\cite{Coq94,Gimenez98}.  The motivation behind uniform proofs is to give an elegant and thorough proof theoretic analysis of automated reasoning with Horn clauses. 
Taking this perspective, the presented work advances the state of knowledge in several ways: 
 
\begin{itemize}[leftmargin=3mm] 
\item The ``coinductive diamond" covers and extends the currently available coinductive algorithms for Horn clause logic. For example,
Co-LP~\cite{GuptaBMSM07,SimonBMG07} is equivalent to \emph{co-fohc} plus  \emph{co-hohc},  
productive corecursion of \cite{KL17} is equivalent to \emph{co-hohc}, but neither extends to \emph{co-fohh} or \emph{co-hohh}. Methods introduced in ~\cite{LammelJ05,FKS15} are subsumed by \emph{co-fohc} and \emph{co-fohh}, but  cannot work with infinite data, 
thus fall short of \emph{co-hohc} and \emph{co-hohh}. 
\item This paper shows a novel application of the coinductive proof principle known in functional programming~\cite{Coq94,Gimenez98} to uniform proofs. 
We take this principle to a terrain where it has considerably less support than in Coq to ensure its soundness: 
e.g. we cannot rely on (co)-inductive data types or proof terms (on which guardedness checks are usually imposed, cf. \cite{Gimenez98}).
\item Our soundness result for \emph{co-hohh} is conceptually  novel, 
and, when restricted to the other three calculi, addresses limitations of similar results published in the literature:
we replace the operational soundness of  \emph{co-fohh} given in~\cite{FKS15} with soundness relative to coinductive models; 
we replace non-constructive proof of soundness of \emph{co-fohh} given in~\cite{FKH16} with a constructive proof; we extend soundness results of \cite{GuptaBMSM07,SimonBMG07} to \emph{co-hohh} and \emph{co-fohh}.
			\item Overall, the presented results advance our understanding of coinductive properties expressible in first-order logic. 
			We thus hope that this work will give a solid formal support for new implementations of coinduction in automated proving based on first-order logic. 
\end{itemize}
 
The paper proceeds as follows.
Section~\ref{sec:backgr} gives the necessary background definitions. 
Section~\ref{sec:cup} introduces the four original coinductive calculi based on uniform proofs. Section~\ref{sec:sound} proves that 
our coinductive calculi are sound relative to the coinductive model of Horn clause logic.
Finally, Section~\ref{sec:rw} concludes and discusses related and future work.


\section{Preliminaries, Fixed-point Terms}\label{sec:backgr}

We first recall some essential definitions concerning the syntax of uniform proofs~\cite{MN12}. In order to be able to work with 
infinite data structures like streams, we extend that original syntax
with fixed-point combinators in a standard way, following~\cite{Pierce02,Nederpelt}. 

Let $\mathbb{V}=\{o,\iota, \ldots\}$ be a countable set of \emph{type variables}, and $\mathbb{T}$ be the set of all \emph{simple types} (in short, \emph{types}) in the abstract syntax:  \[\mathbb{T} ::= \mathbb{V}\mid \left(\mathbb{T}\rightarrow\mathbb{T}\right)\]
  We use $\tau,\sigma$ and variants 
	thereof to denote an arbitrary type.  A type  $\sigma\in \mathbb{V}$ is called a \emph{non-functional} type. A type of the form $\tau\rightarrow\tau'$ is called a \emph{functional} type. 
 Given the right associativity of the functional type constructor ($\rightarrow$), we can remove all redundant parentheses and depict a type  $\tau$ in the form $\tau_1\rightarrow\cdots\rightarrow\tau_n\rightarrow\sigma\ \left(n\geq 0\right)$, where the non-functional type $\sigma$ is the \emph{target type} of $\tau$, and $\tau_1,\ldots\tau_n$ are \emph{argument types} of $\tau$.

\begin{definition}[Terms]
Let  $\mathit{Var}=\{x,y,z,\ldots\}$ be a countable set of \emph{(term) variables}, and $\mathit{Cst}=\{a,b,c,\ldots\}$ -- a countable set of \emph{constants} .    The set $\Lambda$ of all \emph{lambda terms} (in short: \emph{terms}) is defined by 
\[\Lambda ::= \mathit{Var} \mid \mathit{Cst} \mid \left(\Lambda \  \Lambda\right) \mid \left(\lambda \mathit{Var}\, . \, \Lambda\right) \mid \left(\textit{fix}\ \lambda \mathit{Var}\, . \, \Lambda\right) \]  
A term of the form $\Lambda\ \Lambda$  is  called an \emph{application}, and a term of the form $\lambda \mathit{Var}\, . \, \Lambda$  -- \emph{abstraction}.  
A term of the form $\textit{fix}\ \lambda \mathit{Var}\, . \, \Lambda$ is called a \emph{fixed-point term}.
  We use letters $M, N, L$ and variants thereof to denote members of $\Lambda$.  
\end{definition}
	 Given a term $M$, the set $FV\left(M\right)$ of \emph{free variables}  and 
	the set  $\textit{Sub}\left(M\right)$  of sub-terms of $M$ are defined as standard, cf. Appendix~\ref{sec:syntax}.  
 Term $M$ is \emph{closed}, if $FV\left(M\right)=\emptyset$. Otherwise, $M$ is  \emph{open}. We define $\Sigma$, called a \emph{signature}, be a partial function mapping from $\mathit{Cst}$ to the set $\mathbb{T}$ of simple types, and $\Gamma$, called a \emph{context}, be a partial function mapping from $\mathit{Var}$ to $\mathbb{T}$.
	A term $M$ is \emph{well typed} if it satisfies type judgment $\Sigma,\Gamma \vdash M:\tau$ for some $\Sigma,\Gamma$ and $\tau$, using the standard typing rules given in Appendix~\ref{sec:syntax}. 
	We will only work with well typed terms.
	All \emph{free} variables in a term must have their types assigned by $\Gamma$. Therefore, we use $\Sigma,\emptyset \vdash M:\sigma$ to imply that $M$ is a \emph{closed} term.

We use the symbols $\land$ for \emph{conjunction}, $\lor$ for \emph{disjunction}, $\supset$ for \emph{implication}, and $\top$ for the \emph{true} proposition. Given some type $\tau$, we use $\forall\!_\tau$ for \emph{universal quantification}, and $\exists_\tau$ for \emph{existential quantification}, over terms of type $\tau$. These symbols are special constants, called \emph{logical constants} and their types are as follows:  $\land,\lor,\supset:o \rightarrow o \rightarrow o$, and $\top:o$, and $\forall\!_\tau,\exists_\tau: \left(\tau \rightarrow o\right) \rightarrow o$.  
Quantification over a term $M$ is a short hand for applying the quantifier to an abstraction over $M$.  
We assume that a signature $\Sigma$ always contains at least all logical constants.

\begin{definition}[First-order types and terms]
The \emph{order} of a type $\tau$, denoted $\mathit{Ord}\left(\tau \right)$, is
 \begin{align*}
 \mathit{Ord}\left(\tau\right) & = 0\quad  \text{(provided that $\tau$ is non-functional)}\\
 \mathit{Ord}\left(\tau_1\rightarrow\tau_2\right) & = \max \{\mathit{Ord}\left(\tau_1\right) + 1,\ \mathit{Ord}\left(\tau_2\right)\}    
 \end{align*}
A term $M:\sigma$ is called \emph{first order} if the following conditions are satisfied.
\begin{enumerate}
\item $\mathit{Ord}\left(\sigma\right)= 0$, and
\item $\{\mathit{Ord}\left(\tau\right)\mid {N:\tau}\in\mathit{Sub}\left(M\right) \land N\in\textit{Cst}\}\subseteq\{0,1\}$, and
\item $\{\mathit{Ord}\left(\tau\right)\mid {N:\tau}\in\mathit{Sub}\left(M\right) \land N\in\textit{Var}\}\subseteq\{0\}$, and \label{enum item: variable in first order term }
\item $o\notin\{\tau\mid {N:\tau}\in\mathit{Sub}\left(M\right)\}$, and
\item $\emptyset=\{N\in\mathit{Sub}\left(M\right)\mid N\text{ is a fixed point term}\}$.
\end{enumerate}
\end{definition}

 A \emph{predicate} is  a variable or a non-logical constant of type $\tau$ such that either $\tau=o$ or the target type of $\tau$ is $o$. We say that a predicate is \emph{first order} if it is a non-logical constant, whose type expression is of order at most 1, and the type $o$ does  not  occur as its argument type. Otherwise a predicate is of \emph{higher order}. 
%
%
A signature  $\Sigma$ is  \emph{first-order} if for all non-logical constants $b$ in $\Sigma$, (i) the type expression of $b$ is at most of order $1$, and (ii) the type $o$ is involved in the type expression of  $b$ only if $b$ is a first order predicate.

\emph{Substitution} of all free occurrences of a variable $x$ in a term $M$ by a term $N$, denoted  $M\left[x:=N\right]$, and the notion of $\alpha$-equivalence are defined as standard, see Appendix~\ref{sec:syntax}. 
 We use $\equiv$ to denote the relation of \emph{syntactical identity modulo $\alpha$-equivalence}.
We call $\left[x_1:=N_1\right]\ldots\left[x_m:=N_m\right]$
a \emph{substitution} when we have
$M\left[x_1:=N_1\right]\ldots\left[x_m:=N_m\right]$ (the latter denotes $ \left(\ldots\left(M\left[x_1:=N_1\right]\right)\ldots\right)\left[x_m:=N_m\right] $).

\emph{One step $\beta$-reduction}, denoted $\rightarrow_\beta$, is defined as $\left(\lambda x\,.\, M\right)N\rightarrow_\beta M\left[x:=N\right]$. Moreover, it is defined that if $M\rightarrow_\beta N$, then $M\ L \rightarrow_\beta N\ L$, $L\ M \rightarrow_\beta L\ N$, $\lambda z \, . \, M  \rightarrow_\beta \lambda z \, . \, N$ and $\textit{fix}\ M \rightarrow_\beta \textit{fix}\ N$. If there is no term $N$ such that $M\rightarrow_\beta N$, then $M$ is \emph{$\beta$-normal}. We define  $\beta$-\emph{equivalence}, denoted $=_\beta$, 
as a transitive and reflexive closure of $\rightarrow_\beta$. Because the $fix$ primitive is treated as a constant symbol when doing $\beta$-reduction, it does not affect strong normalisation of $\beta$-reduction. 
In this paper we assume that all terms are in $\beta$-normal form unless otherwise stated.

\emph{One step $\textit{fix}\beta$-reduction}, denoted $\rightarrow_{\textit{fix}\beta}$, is defined as \[\textit{fix}\ \lambda x \ .\ M \rightarrow_{\textit{fix}\beta} M\left[x:= \textit{fix}\ \lambda x \ .\ M\right] \] Moreover, we require that, if $P \rightarrow_{\textit{fix}\beta} N$, then for arbitrary $z$ and $L$, $P\ L \rightarrow_{\textit{fix}\beta} N\ L$, $L\ P \rightarrow_{\textit{fix}\beta} L\ N$, $\lambda z \, . \, P  \rightarrow_{\textit{fix}\beta} \lambda z \, . \, N$ and $\mathit{fix}\ P \rightarrow_{\textit{fix}\beta} \mathit{fix}\ N$. 
${\textit{Fix}\beta}$\emph{-equivalence}, denoted $=_{\textit{fix}\beta}$, is defined as: 
$M=_{\textit{fix}\beta} N$ if there is $n\geq 0$ and there are terms $M_0$ to $M_n$ such that $M_0\equiv M$, $M_n\equiv N$, and for all $i$ such that $0\leq i < n$, $M_i =_\beta M_{i+1}$ or $M_i \rightarrow_{\textit{fix}\beta} M_{i+1}$ or $M_{i+1} \rightarrow_{\textit{fix}\beta} M_{i}$.


\begin{example}[Stream of zeros]\label{eg: zeros as fix point}
Given
non-logical constants  $0:\iota$ and 
$\textit{scons}:\iota \rightarrow \iota \rightarrow \iota$ (\textit{scons} can be identified with $[\_|\_]$),
we represent the stream of zeros in the form of a fixed-point term $\textit{z\_str}=_{\mathit{def}} \textit{fix}\ \lambda x\,.\, \textit{scons}\ \ 0\ \ x  $. The following relations justify that $z\_str$ models the stream of zeros:
\begin{align*}
	\textit{z\_str} & =_{\mathit{def}}  \textit{fix}\ \lambda x\,.\, \textit{scons}\ \ 0\ \ x \\
	&   \rightarrow_{\textit{fix}\beta} \textit{scons}\ \ 0\ \ \textit{z\_str}  & \\
	& \rightarrow_{\textit{fix}\beta} \textit{scons}\ \ 0\ \ \left(\textit{scons}\ \ 0\ \ \textit{z\_str}\right)   &   \\          
	&   \ldots & 
	\end{align*}

\end{example}

\begin{example}[Constant stream]\label{eg: number stream as fix point term}
The stream of zeros can be generalised to a stream of any particular number, 
as follows: $\textit{n\_str}\ =_{\mathit{def}}\ \textit{fix}\ \lambda fn\,.\,\textit{scons}\ n \left(f\ n \right) $. It 
takes a term $x:\iota$ as parameter and returns a stream of $x$'s:
\begin{align*}
\textit{n\_str}\ x & =_{\mathit{def}}  \left(\textit{fix}\ \lambda fn\,.\,\textit{scons}\ n \left(f\ n \right)\right)\ x\\
& =_{\textit{fix}\beta}\  \textit{scons}\ x \left(\textit{n\_str}\ x \right)\\
& =_{\textit{fix}\beta}\  \textit{scons}\ x \left(\textit{scons}\ x \left(\textit{n\_str}\ x \right) \right)    \\
& \ldots                  
\end{align*}

\end{example}

\begin{example}[Stream of successive numbers]\label{eg: from as fix point}
We add a successor function $s:\iota\rightarrow\iota$ to the signature of Example~\ref{eg: zeros as fix point}.
We represent, in the form of a fixed-point term, the family of streams of successive natural numbers: $ \textit{fr\_str}=_{\mathit{def}} \textit{fix}\ \lambda f n\ .\ \textit{scons}\ \ n\  \left(f\  \left(s\ n\right)\right)$.

Again, the following relations hold for all $n:\iota$ (for instance, $n$ can be $0$, $s\; 0$, \ldots ):
\begin{align*}
  \textit{fr\_str}\ n & =_{\mathit{def}} \quad         \left( \textit{fix}\ \lambda f n\ .\ \textit{scons}\ \ n\  \left(f\  \left(s\ n\right)\right)\right) \ n\\
 & =_{\textit{fix}\beta}    \textit{scons}\ \ n\  \left(\textit{fr\_str} \left(s\ n\right) \right)\\
 & =_{\textit{fix}\beta} \    \textit{scons}\ \ n\  \left(\textit{scons}\ \left(s\ n\right)\  \left(\textit{fr\_str} \left(s^2\ n\right) \right)\right)\\
 & =_{\textit{fix}\beta} \    \textit{scons}\ \ n\  \left(\textit{scons}\ \left(s\ n\right)\  \left(\textit{scons}  \left(s^2\ n\right)\  \left(\textit{fr\_str} \left(s^3\ n\right) \right) \right)\right)\\
 &\ldots &
\end{align*} 

\end{example}

In line with the standard literature on this subject~\cite{BK08,BlanchetteM0T17}, we now need to introduce guardedness conditions on fixed-point definitions, otherwise some of them may not define any infinite objects. A detailed discussion of term productivity or various guardedness conditions that insure this property are beyond the scope of this paper. We simply adapt standard guardedness conditions~\cite{Gimenez98}, although other methods available in the literature (e.g.~\cite{BlanchetteM0T17}) may work as well. 

\begin{definition}[Guarded fixed point terms]
	\label{defn: guarded fpt}
A guarded fixed point term has the form \[\mathit{fix}\ \lambda x\,.\,\lambda y_1\ldots y_m\,.\, f\ L_1\ldots L_k\ \left(x\ N_1\ldots N_m\right)\ L_{k+1}\ldots L_r\qquad \textrm{where}\]

\begin{enumerate}[leftmargin=3mm]
	\item $x$ is a variable of type $\tau_1\rightarrow\cdots \rightarrow \tau_m \rightarrow \iota$, where $\tau_1=\ldots =\tau_m =\iota$ and $m\geq 0$.
	\item $f$ is a constant of type $\sigma_1\rightarrow\cdots  \rightarrow \sigma_{r+1} \rightarrow \iota$ where $\sigma_1=\ldots= \sigma_{r+1} =\iota$ and $ r\geq 0$.
	\item $L_1,\ldots, L_r,N_1,\ldots, N_m$ are first order terms of type $\iota$.
	\item $x\notin\{y_1:\iota,\ldots,y_m:\iota\}=\mathit{FV}\left(L_1\right)\cup\cdots\cup\mathit{FV}\left(L_r\right)\cup\mathit{FV}\left(N_1\right)\cup\cdots\cup\mathit{FV}\left(N_m\right)$.
\end{enumerate}
\end{definition}	
Clearly, the generic guarded fixed point term in Definition~\ref{defn: guarded fpt} is closed and has type $\tau_1\rightarrow\cdots \rightarrow \tau_m \rightarrow \iota$. All fixed point terms shown in examples above are guarded.
\begin{definition}[Guarded full terms]
A guarded full term is either a first order term, or a higher order term in the form $F\ M_1\ldots M_n$ where $F:\tau_1\rightarrow\cdots \rightarrow \tau_n \rightarrow \iota$ is a guarded fixed point term, and $M_1:\iota,\ldots, M_n:\iota$ are first order terms, provided $\tau_1=\ldots =\tau_n =\iota$ and $n\geq 0$.  	 
\end{definition}	

An \emph{atomic formula} (in short, \emph{atom}) is a lambda term of type $o$, in the form $h \ M_1\ \ldots\ M_n$, where $n\geq 0$, $h$ is a predicate, and $M_1, \ldots,M_n$ are lambda terms; if $h$ is a variable, then we say the atom is \emph{flexible}; if $h$ is a non-logical constant, then the atom is \emph{rigid}; if $h$ is a first order predicate, and $M_1, \ldots,M_n$ are first order terms, then the atom is of  \emph{first order}.
%
	If  $h$ is a first order predicate, and $M_1, \ldots,M_n$ are guarded full terms, then the atom $A=_{\mathit{def}}h \ M_1\ \ldots\ M_n$ is guarded, and all atoms $A'$ such that $A'=_{\mathit{fix}\beta} A$ are guarded.

\section{Coinductive Uniform Proofs}\label{sec:cup}


We start with introducing 
 the formal languages of four uniform proof calculi by Miller and Nadathur~\cite{MN12}.
As in the original setting, the distinction between Horn clauses and hereditary Harrop clauses lies in enriched syntax for goals: the latter admits universally quantified and implicative goals. The distinction between first-order and higher-order logics  is made by allowing or disallowing higher-order terms, as defined in the previous section.

Assume a signature $\Sigma$. Let $\mathcal{U}^\Sigma_1$ be the set of all terms over $\Sigma$ that do \emph{not} contain the logical constants $\forall$ and $\supset$, and  $\mathcal{U}^\Sigma_2$ be the set of all terms over $\Sigma$ that do \emph{not} contain the logical constant $\supset$ \footnote{Miller and Nadathur called the sets $\mathcal{U}^\Sigma_1$ and $\mathcal{U}^\Sigma_2$  ``Herbrand universes'' \cite[\protect\S 5.2]{MN12}. Here we reserve the name ``Herbrand'' for later modal-theoretic study of the languages.}. 
Further, let $A$ and $A_r$ denote the sets of atoms and rigid atoms, respectively, on $\Sigma$. 
 In the setting of \emph{co-fohc}  and \emph{co-fohh},  let $A$ be first order (we will denote this fact by $A^1$ in Table~\ref{tab:up}). 
In the setting of \emph{co-hohc}, let $A$ and $A_r$ be from $\mathcal{U}^\Sigma_1$; in the setting of  \emph{co-hohh}, let $A$ and $A_r$ be from $\mathcal{U}^\Sigma_2$. 
Using the sets $A^1$, $A_r$, and $A$,  Table~\ref{tab:up} defines, for each of the four calculi, the set $D$ of \emph{program clauses}  and the set $G$ of \emph{goals}.

\begin{table*}[t] 
 \centering
 {\footnotesize{
 \renewcommand{\arraystretch}{1.5}
 \begin{tabular}{p{1.2cm}|| p{4.8cm} ||  p{7.2cm} } 
   \hline
	& Program Clauses & Goals 
	\\ \hline \hline
	\emph{co-fohc} & $D \ ::= A^1 \mid G \supset D \mid  D\land D  \ \mid \forall Var \ D$ &  
	$G \ ::= \top \mid A^1 \mid G \land G \mid  G\lor G  \mid \exists Var \ G$ 
	\\ 
	\emph{co-hohc} & $D \ ::= A_r \mid G \supset D \mid  D\land D  \ \mid \forall Var \ D$ & $G \ ::= \top \mid A \mid G \land G \mid  G\lor G  \mid \exists Var \ G$ 
	\\ 
	\emph{co-fohh} & $D \ ::= A^1 \mid G \supset D \mid  D\land D  \ \mid \forall Var \ D$ & 
	$G \ ::= \top \mid A^1 \mid G \land G \mid  G\lor G  \mid \exists Var \ G \mid D \supset G\mid \forall Var\ G$ 
	\\ 
	\emph{co-hohh} & $D \ ::= A_r \mid G \supset D \mid  D\land D  \ \mid \forall Var \ D$ & $G \ ::= \top \mid A \mid G \land G \mid  G\lor G  \mid \exists Var \ G \mid D \supset G\mid \forall Var\ G$ 
	\\ \hline
\end{tabular}}
}
\caption{\footnotesize{\textbf{D- and G-formulae}. For each language, we define by mutual induction \emph{goal formulae} (denoted by $G$) and \emph{program clauses} (denoted by $D$). 
}}
\label{tab:up}
\end{table*}
  
Given a signature $\Sigma$, a \emph{logic program} $P$ is a finite set of $D$-formulae over $\Sigma$.
	In the rest of this paper, we will assume that the signature is determined by a given logic program $P$ and is \emph{always first-order}. 
	We can still build higher-order terms (including guarded fixed point terms) over a first-order signature as we place no restriction   on types of variables. 
A \emph{sequent} is an expression of the form $\Sigma;P\longrightarrow G$, encoding the proposition that the goal formula $G$ is 
provable from the logic program $P$ based on the signature $\Sigma$. We recall the \emph{uniform proof} rules 
in Figure~\ref{fig: inductive uniform proof}. 

\begin{figure*}[t]
\centering
\footnotesize{
	$ \infer[\textsc{$\supset\!\! R$}]{\Sigma;P \longrightarrow D \supset G}{\Sigma;P, D \longrightarrow G} $ \hspace{1cm}
	$ \infer[\textsc{$\forall R$}]{\Sigma; P \longrightarrow \forall_{\!\tau} x\ G}{c:\tau,\Sigma; P \longrightarrow G\left[x:=c\right]}$ \hspace{1cm}
	$\infer[\textsc{$\exists R$}]{\Sigma;P \longrightarrow \exists_\tau x\ G}{\Sigma;P \longrightarrow G\left[x:=N\right] }$ 
		
	\vspace{7pt}	
	
	$ \infer[\textsc{$\lor R$}]{\Sigma;P \longrightarrow G_1 \lor G_2}{\Sigma;P \longrightarrow G_1 }\quad
	\infer[\textsc{$\lor R$}]{\Sigma;P \longrightarrow G_1 \lor G_2}{\Sigma;P \longrightarrow G_2 }$ \hspace{1cm} 
	$\infer[\textsc{$\land R$}]{\Sigma;P \longrightarrow G_1 \land G_2}{\Sigma;P \longrightarrow G_1 & \Sigma;P \longrightarrow G_2} $ 
	
	\vspace{5pt}
	 
	$\infer[\textsc{$\supset\! L$}]{\Sigma;P \stackrel{G \supset D}{\longrightarrow} A}{\Sigma;P \stackrel{D}{\longrightarrow} A & \Sigma;P \longrightarrow G }$ \hspace{1cm}
	$\infer[\textsc{$\land L$}]{\Sigma;P \stackrel{D_1 \land D_2}{\longrightarrow} A}{\Sigma;P \stackrel{D_1}{\longrightarrow} A}\quad
	\infer[\textsc{$\land L$}]{\Sigma;P \stackrel{D_1 \land D_2}{\longrightarrow} A}{\Sigma;P \stackrel{D_2}{\longrightarrow} A}$ \hspace{1cm}
	$\infer[\textsc{$\forall L$}]{\Sigma;P \stackrel{\forall_{\!\tau} x\ D}{\longrightarrow} A}{\Sigma;P \stackrel{D\left[x:=N\right]}{\longrightarrow} A }$ 
	
	\vspace{5pt}
	
		$ \infer[\textsc{decide}]{\Sigma; P \longrightarrow A}{\Sigma;P \stackrel{D}{\longrightarrow} A}$\hspace{1cm} 
		$ \infer[\textsc{initial}]{\Sigma; P \stackrel{A'}{\longrightarrow} A}{}$ \hspace{1cm}
		$\infer[\textsc{$\top R$}]{\Sigma;P \longrightarrow \top}{}\quad$ 
		}
	\caption[Uniform Proof]{\footnotesize{\textbf{Uniform proof rules~\cite{MN12}}. A sub-expression $P,D$ on the left hand side of a sequent arrow stands for $P\cup\{D\}$. 
		\emph{Rule restrictions:} in $\exists R$ and $\forall L$, $\Sigma;\emptyset\vdash N:\tau$. Moreover, 
			if used in  \emph{co-fohc} or \emph{co-fohh}, then $N$ is first order;
			if used in  \emph{co-hohc}, then $N\in {\mathcal{U}_1^\Sigma}$; 
			if used in  \emph{co-hohh}, then  $N\in\mathcal{U}_2^\Sigma$.
		In $\forall R$,  $c:\tau\notin\Sigma$ and the expression $c:\tau,\Sigma$ denotes $\Sigma \cup \{c:\tau\}$ ($c$ is also known as an \emph{eigenvariable}).
		In $\textsc{decide}$,  $D\in P$. 
		If the rule  \textsc{initial} is used in \emph{co-fohc} or \emph{co-fohh}, then $A\equiv A'$; if it is used in \emph{co-hohc} or \emph{co-hohh}, then $A=_{fix\beta} A'$.  
	}}\label{fig: inductive uniform proof}
\end{figure*}

We can now proceed to introduce the main rule of this paper, i.e. the coinductive proof rule. We take inspiration from~\cite{Coq94,Gimenez98}, but make a few adaptations to the uniform proof syntax. 
We distinguish  \emph{coinductive} entailment from entailment by the original uniform rules. An expression of the form $\Sigma;P\looparrowright  G$, also called a \emph{sequent}, means: the goal formula $G$ is \emph{coinductively} provable from the logic program $P$ on the signature $\Sigma$. The \textsc{co-fix} rule for $\looparrowright$  is given in  Figure~\ref{fig: general co-fix}.

\begin{figure}[H] 
\footnotesize{
\begin{center}
	$\infer[\textsc{co-fix}]{\Sigma;P \looparrowright M}{\Sigma;P,M \longrightarrow \langle M\rangle}$
\end{center}
}
	\caption[Co-fix rule]{\footnotesize{\textbf{Coinductive fixed point rule.} 
		In the upper sequent of \textsc{co-fix} rule, the left occurrence of $M$ is called a \emph{coinductive hypothesis}, and the right occurrence of $M$, marked by $\langle\rangle$, is called a \emph{coinductive goal}.}}\label{fig: general co-fix}
\end{figure}


This simple step already gives a useful insight into suitable coinductive proof principles for the four logics we introduced. Namely, because the same formula $M$ must occur both as a \emph{coinductive hypothesis} on the left of $\longrightarrow$ and the \emph{coinductive goal} on the right of $\longrightarrow$ in the rule \textsc{co-fix}, this restricts the syntax of the coinductive hypothesis to \emph{core formulae}, i.e. formulae satisfying the definition of both program clauses and goals:\\
\begin{center}
{\footnotesize{\renewcommand{\arraystretch}{1.5}
 \begin{tabular}{p{1.5cm}  p{4cm}  p{1.5cm}  p{5.5cm} } 
	 \emph{co-fohc:} & $M := A^1 \mid M\land M$ & \emph{co-fohh:}  & $ M := A^1 \mid M \land M \mid M \supset M \mid \forall Var\ M$ \\
	\emph{co-hohc:} & $M := A_r \mid M \land M$  & \emph{co-hohh:} &  $M :=  A_r \mid M \land M \mid M \supset M \mid \forall Var\ M$
\end{tabular} }}
\end{center}

To the best of our knowledge such precise relationship between the power of the calculus in question and the expressivity of the coinductive hypotheses it admits has not been described in the literature before. Coq, for example, imposes some restrictions on the shape of coinductive hypotheses (e.g. existential formulae cannot be taken as coinductive hypotheses in Coq), but those restrictions are motivated differently.

The next problem is how to guard the coinductive rule from unsound applications.  To this aim, we introduce the guarding notation $\langle M\rangle$ within the \textsc{co-fix} rule, signifying the guarded status of this goal.
In order to be able to construct proofs with guarded formulae, we introduce a set of rules in Figure~\ref{fig: rules with angles}.
A similar approach to guarding coinductive proofs has recently been suggested independently (and in a different calculus) by Basold~\cite{B18}. 

The \emph{coinductive uniform proof rules} 
are given in 
Figures~\ref{fig: inductive uniform proof}, \ref{fig: general co-fix}, \ref{fig: rules with angles}.
A  \emph{proof} for a sequent 
is a finite tree constructed using coinductive uniform proof rules and such that the root is labelled with $\Sigma;P\looparrowright M$ 
or $\Sigma;P\longrightarrow G$, and leaves are labelled with \emph{initial sequents} (i.e. sequents that can occur as a lower sequent in the rules \textsc{initial} or $\textsc{initial}\langle\rangle$). We say that a proof is constructed in \emph{co-fohc}, \emph{co-fohh}, \emph{co-hohc} or \emph{co-hohh} depending on whether the set of all formulae occuring in its tree satisfy the syntax of \emph{co-fohc}, \emph{co-fohh}, \emph{co-hohc} or \emph{co-hohh}.

 \begin{figure*}[t]
 	\footnotesize{
 	$\infer[\textsc{$\supset\!\! R$}\langle\rangle]{\Sigma;P \longrightarrow \langle M_1 \supset M_2\rangle}{\Sigma;P, M_1 \longrightarrow \langle M_2\rangle} $
 	\ \ \
 	$\infer[\textsc{$\forall R$}\langle\rangle]{\Sigma; P \longrightarrow \langle \forall_{\!\tau} x\ M\rangle}{c:\tau,\Sigma; P \longrightarrow \langle M\left[x:=c\right]\rangle}$
   	 	\ \ \
		 	$\infer[\textsc{$\land R$}\langle\rangle]{\Sigma;P \longrightarrow \langle M_1 \land M_2\rangle}{\Sigma;P \longrightarrow \langle M_1\rangle & \Sigma;P \longrightarrow \langle M_2\rangle} $
 		
 		\vspace{5pt}
 			
 	 	$\infer[\textsc{$\supset\! L$}\langle\rangle]{\Sigma;P \stackrel{G \supset D}{\longrightarrow} \langle A\rangle }{\Sigma;P \stackrel{D}{\longrightarrow} A & \Sigma;P \longrightarrow  G }$
	\ \ \
 	 	$\infer[\textsc{$\land L$}\langle\rangle]{\Sigma;P \stackrel{D_1 \land D_2}{\longrightarrow} \langle A\rangle }{\Sigma;P \stackrel{D_1}{\longrightarrow} \langle A\rangle }$
 	 	\ \ \
	$\infer[\textsc{$\land L$}\langle\rangle]{\Sigma;P \stackrel{D_1 \land D_2}{\longrightarrow} \langle A\rangle }{\Sigma;P \stackrel{D_2}{\longrightarrow} \langle A\rangle }$
 	 	\ \ \
 	 	$\infer[\textsc{$\forall L$}\langle\rangle]{\Sigma;P \stackrel{\forall x\ D}{\longrightarrow} \langle A\rangle }{\Sigma;P \stackrel{D\left[x:= N\right]}{\longrightarrow} \langle A\rangle }$}
		 
		 \vspace{5pt}
		 	
		$\infer[\textsc{decide}\langle\rangle]{\Sigma; P \longrightarrow \langle A\rangle}{\Sigma;P \stackrel{D^*}{\longrightarrow} \langle A\rangle }$ 
		\ \ \
		$\infer[\textsc{initial}\langle\rangle]{\Sigma; P \stackrel{A'}{\longrightarrow} \langle A\rangle}{}$
 	\caption{\footnotesize{\textbf{Rules for guarded coinductive goals.} Rule restrictions: $D^*$ in $\textsc{decide}\langle\rangle$ must be a clause from the \emph{original} program, i.e. not a coinductive hypothesis or a formula added by $\supset\!\! R\langle\rangle$.} }\label{fig: rules with angles}
 \end{figure*}

We now demonstrate the coinductive proofs in   these  four calculi  
using the running examples from Introduction. In all examples, we use symbol $P$ to denote the logic  program that conists of  clauses (1) -- (9) given in Introduction. 
When we want to refer to a particular clause $m$ in $P$, we will use notation $P(m)$.

\begin{example}[Coinductive rules in \emph{co-fohc}]\label{ex:2}
Figure~\ref{fig:ex2} shows a proof involving clauses (6) and (7) from Introduction, and not involving infinite stream construction.  The coinductive hypothesis and the goal $member \ 0\ [0\ | \ nil]$ is expressed in \emph{co-fohc}.
Notice how the coinductive goal remains guarded until the application of the rule $\supset\! L \langle \rangle$, after which the coinductive hypothesis can be safely applied.
As usual for coinductive proofs in \emph{co-fohc}, the proof is regular, in the sense that the coinductive hypothesis 
 taken at the start of the proof applies verbatim later in the proof.  
\end{example}

\begin{figure*}[t]
\begin{displaymath}
\resizebox{\textwidth}{!}{
		\infer[\textsc{co-fix}]		
		{
			\Sigma;P \looparrowright \mathit{ member \ 0\ [0\ | \ nil]}
		}
		{	
		\infer[\textsc{decide}\langle \rangle]	
			{
				\Sigma;P,CH\longrightarrow\mathit{\langle   member \ 0\ [0\ | \ nil] \rangle}
			}
			{	
				\infer[\forall L\left \langle \rangle(\text{3 times}\right)]
				{		
					\Sigma;P,CH\stackrel{P\left(6\right)}{\longrightarrow}\mathit{\langle member \ 0\ [0\ | \ nil] \rangle}  					
				}				
				{	
					\infer[\supset L \langle \rangle]	
					{
						\Sigma;P,CH\stackrel{\mathit{(member\ 0\ [0 | nil] \land eq \ 0 \ 0) \;\supset\; member \ 0\ [0 | nil] }}{\longrightarrow}\mathit{\langle member \ 0\ [0\ | \ nil] \rangle}	
					}
					{	
						\infer[\textsc{initial}]	
						{
									\Sigma;P,CH\stackrel{\mathit{member \ 0\ [0 | nil] }}{\longrightarrow}\mathit{ member \ 0\ [0\ | \ nil] }	
						}
						{	}           
						&
					\infer[\land R]
	{
		\Sigma;P,CH\longrightarrow\mathit{ member\ 0\ [0 | nil] \land eq \ 0 \ 0 }}
						{ 
						\infer[\textsc{decide}]{\Sigma;P,CH\longrightarrow\mathit{ member\ 0\ [0 | nil] }}
						{
									\infer[\textsc{initial}]{\Sigma;P,CH\stackrel{CH}{\longrightarrow}\mathit{ member\ 0\ [0 | nil] }}
									{}
						}
						&
						\infer[\textsc{decide}]{\Sigma;P,CH\longrightarrow\mathit{ eq \ 0 \ 0 }}
						{
						\infer[\forall L]{\Sigma;P,CH\stackrel{P(7)}{\longrightarrow}\mathit{ eq \ 0 \ 0 }}
						{
						\infer[\textsc{initial}]{\Sigma;P,CH\stackrel{eq\ 0 \ 0}{\longrightarrow}\mathit{ eq \ 0 \ 0 }}
						{}
						}
						}
						}
}
																			}
						}	}	
			}
\end{displaymath}
	
\caption{ 	\footnotesize{\textbf{Running example: a coinductive proof for the sequent $\Sigma;P \looparrowright \mathit{member\ 0\ [0 \ | \ nil]}$ in \emph{co-fohc}.} We use $CH$ to abbreviate the coinductive hypothesis $\mathit{member\ 0\ [0 \ | \ nil]}$. }}\label{fig:ex2}
	
\end{figure*}

\begin{example}[Coinductive rules in \emph{co-hohc}]\label{ex:1}
The next proof (for clauses (3) and (4)) is given in Figure~\ref{fig:ex1}. It is similar to the proof of Figure~\ref{fig:ex2}, in that it is also regular, and requires a coinductive hypothesis satisfying the syntax of Horn clause logic.
However, this time the coinductive hypothesis  $bitstream\ [0 \ | \ n\_str \ 0 ]$ involves construction of a stream of zeros, given as $n\_str \ 0$, where the fixed point term $n\_str$ is defined in Example~\ref{eg: number stream as fix point term}. In Introduction, we actually wanted to prove
$\exists \ y (bistream\ [0| y])$, but existential formulae are not allowed to be coinductive hypotheses, by the previous discussion. 
We can only prove $\exists \ y (bistream\ [0| y])$ if we add the coinductively proven lemma   $bitstream\ [0 \ | \ n\_str \ 0 ]$ to  $P$. Thus, we can infer $\Sigma; P, bitstream\ [0 \ | \ n\_str \ 0 ] \longrightarrow \exists \ y (bistream\ [0| y])$ by the standard rules
of \emph{hohc} given in Figure~\ref{fig: inductive uniform proof}.
\end{example}
In the next section, we will prove that such manipulation with proven lemmas is sound. 
Our proof can be seen as a semantic version of cut admissibility. However, we follow the uniform proof tradition and do not introduce a cut rule directly into the calculus.

\begin{figure*}[t]
\begin{displaymath}
\resizebox{\textwidth}{!}{
		\infer[\textsc{co-fix}]		
		{
			\Sigma;P \looparrowright \mathit{bitstream\ [0 \ | \ n\_str \ 0 ]}
		}
		{	
		\infer[\textsc{decide}\langle \rangle]	
			{
				\Sigma;P,CH\longrightarrow\mathit{\langle   bitstream\ [0 \ | \ n\_str \ 0 ] \rangle}
			}
			{	
				\infer[\forall L\left \langle \rangle(\text{2 times}\right)]
				{		
					\Sigma;P,CH\stackrel{P\left(3\right)}{\longrightarrow}\mathit{\langle bitstream\ [0 \ | \ n\_str \ 0 ] \rangle}  					
				}				
				{	
					\infer[\supset L \langle \rangle]	
					{
						\Sigma;P,CH\stackrel{\mathit{(bitstream\ (n\_str \ 0) \;\land\; bit\ 0  ) \;\supset\; bitstream [ 0\ |\ n\_str \ 0] }}{\longrightarrow}\mathit{\langle bitstream\ [0 \ | \ n\_str \ 0 ] \rangle}	
					}
					{	
						\infer[\textsc{initial}]	
						{
									\Sigma;P,CH\stackrel{\mathit{bitstream\ [0 \ | \ n\_str \ 0 ] }}{\longrightarrow}\mathit{ bitstream\ [0 \ | \ n\_str \ 0 ] }	
						}
						{	}                
						&
	\infer[\land R]
	{
		\Sigma;P,CH\longrightarrow\mathit{ bitstream\ (n\_str \ 0) \;\land\; bit\ 0  }}
	{ 
		\infer[\textsc{decide}]{\Sigma;P,CH\longrightarrow\mathit{ bitstream\ (n\_str \ 0) }}
		{
		\infer[\textsc{initial}\checkmark]{\Sigma;P,CH\stackrel{CH}{\longrightarrow}\mathit{ bitstream\ (n\_str \ 0)}}
		{}
		}
						&
		\infer[\textsc{decide}]{\Sigma;P,CH\longrightarrow\mathit{ bit\ 0  }}
		{
		  \infer[\textsc{initial}]{\Sigma;P,CH\stackrel{P(4)}{\longrightarrow}\mathit{ bit\ 0  }}
		  {
		   }
		 }
	 }
    }
													}
	}			}
				}
\end{displaymath}

\caption{	\footnotesize{\textbf{Running example: a coinductive proof for the sequent $\Sigma;P \looparrowright \mathit{bitstream\ [0 \ | \ n\_str \ 0 ]}$ in \emph{co-hohc}.} We use $CH$ to abbreviate the coinductive hypothesis $\mathit{bitstream\ [0 \ | \ n\_str \ 0]}$. Note that, at the step marked by $\checkmark$,  $CH=_{fix\beta}\mathit{ bitstream\ (n\_str \ 0)}$ }}\label{fig:ex1}
	
\end{figure*}


\begin{example}[Coinductive rules in \emph{co-hohh}]\label{ex:3}
The next example of a coinductive uniform proof (for clause (8)) is given in Figure~\ref{fig:ex3}.  It is again more complicated than the previous examples, in that it requires not just fixed point terms, but also the syntax of hereditary Harrop logic for its coinductive hypothesis and goal (given by $\forall x (from\ x\ (fr\_str\ x))$). To see this, suppose we tried to prove $from\ 0\ (fr\_str\ 0)$ directly, following Example~\ref{ex:1}, and suppose we took the coinductive hypothesis $from\ 0\ (fr\_str\ 0)$. After resolving with the clause (8) via the rule $\supset\! L \langle \rangle$, we would reach a goal $from\ (s\ 0\ (fr\_str\ (s \ 0))$ to which the given coinductive hypothesis would not apply.
The reason is irregularity  of the stream in question, that requires us to prove a more general coinductive goal.
As in the previous example, if we wanted to obtain a proof for $from\ 0\ (fr\_str\ 0)$, we can derive it by the rules of Figure~\ref{fig: inductive uniform proof} from $P \cup \forall x (from\ x\ (fr\_str\ x))$. 
\end{example}

\begin{figure*}[t]	

\begin{displaymath}
	\resizebox{\textwidth}{!}{
	\infer[\textsc{co-fix}]		
	{
		\Sigma;P \looparrowright \mathit{\forall x (from\ x\ (fr\_str\ x))}
	}
	{	
		\infer[\forall R \langle \rangle]
		{\Sigma;P,CH\longrightarrow\mathit{\langle \forall x (from\ x\ (fr\_str\ x)) \rangle }}
		{
			\infer[\textsc{decide}\langle \rangle]	
			{
				Z,\Sigma;P,CH\longrightarrow\mathit{\langle  from\ Z\ (fr\_str\ Z) \rangle}
			}
			{	
				\infer[\forall L\left \langle \rangle(\text{2 times}\right)]
				{		
					Z,\Sigma;P,CH\stackrel{P\left(8\right)}{\longrightarrow}\mathit{\langle from\ Z\ (fr\_str\ Z) \rangle}  					
				}				
				{	
					\infer[\supset L \langle \rangle]	
					{
						Z,\Sigma;P,CH\stackrel{\mathit{from\ (s\ Z) (fr\_str\ s\ Z)\  \;\supset\; from\ Z\ [ Z\ |\ (fr\_str\ s \ Z) ] }}{\longrightarrow}\mathit{\langle from\ Z\ (fr\_str\ Z) \rangle}	
					}
					{	
						\infer[\textsc{initial}\checkmark]	
						{
							Z,\Sigma;P,CH\stackrel{\mathit{from\ Z\ [ Z\ |\ (fr\_str\ (s \ Z)) ] }}{\longrightarrow}\mathit{from\ Z\ (fr\_str\ Z)}
						}
						{}                 
						&
						\infer[\textsc{decide}]
						{
							Z,\Sigma;P,CH\longrightarrow \mathit{from\  (s\ Z)\  (fr\_str\ (s \ Z))}	}
						{ 
							\infer[\forall L]
							{
							Z,\Sigma;P,CH \stackrel{CH}{\longrightarrow} \mathit{from\  (s\ Z)\  (fr\_str\ (s \ Z))}
							}
							{
								\infer[\textsc{initial}]
								{Z,\Sigma;P,CH \stackrel{\mathit{from\ (s \ Z)\ (fr\_str\ (s \ Z))}}{\longrightarrow} \mathit{from\  (s\ Z)\  (fr\_str\ (s \ Z))}}
								{}
							}	
						}
					}
				}
			}
		}			
	}
} 
	\end{displaymath}			
\caption{	\footnotesize{\textbf{Running example: a coinductive proof for the sequent $\Sigma;P \looparrowright \forall x (\mathit{from\ x\ (fr\_str\ x))}$ in \emph{co-hohh}}, where $fr\_str$ is defined in Example~\ref{eg: from as fix point}, and $Z$ is an arbitrary eigenvariable. $CH$ abbreviates the coinductive hypothesis $\mathit{\forall x (from\ x\ (fr\_str\ x))}$. The step marked by $\checkmark$ indicates involvement of the relation $\mathit{from\ Z\ [ Z\ |\ (fr\_str\ (s \ Z)) ]}	=_{\textit{fix}\beta} \mathit{from\ Z\ (fr\_str\ Z)}$. }}\label{fig:ex3}
	
\end{figure*}

\begin{example}[Coinductive rules in \emph{co-fohh}]\label{ex:4}
Finally, to complete the picture, we give an example of a proof in \emph{co-fohh}  in Figure~\ref{fig:ex4}.
This example uses clause (9) and does not require infinite data construction via fixed-point terms, but unlike all other examples, it shows
that implicative coinductive hypotheses 
may play an important role.
If we wanted to coinductively prove that a bit $0$  and some given stream, say $n\_str\ 0$, satisfy the relation  $comemember_{bit}$,
we would not be able to prove it directly: similarly to the case of Example~\ref{ex:3}, the proof is irregular. I.e., taking the coinductive hypothesis and  goal $comemember_{bit}\ 0 \ (n\_str\ 0)$, we would resolve it with the clause (9), only to get a subgoal 
$comemember_{bit}\ 0 \ f(n\_str\ 0)$, to which the given coinductive hypothesis would not apply. 
But we would be able to prove $comemember_{bit}\ 0 \ (n\_str\ 0)$, if we coinductively prove $\forall \ y \ s \ (bit\ y \ \supset comemember_{bit}\ y \ s)$ first, as Figure~\ref{fig:ex4} shows,  and then use the extended logic program 
$P \cup \forall \ y \ s \ (bit\ y \ \supset comemember_{bit}\ y \ s)$ and the rules of Figure~\ref{fig: inductive uniform proof}.
\end{example}

\begin{figure*}[t]
\begin{displaymath}
\resizebox{\textwidth}{!}{
\infer[\textsc{co-fix}]		
{
	\Sigma;P \looparrowright \mathit{\forall \ y \ s \ (bit\ y \ \supset com_{bit}\ y \ s)}
}
{	
	\infer[\forall R \langle \rangle\ (\text{twice})]
	{
		\Sigma;P,CH\longrightarrow\mathit{\langle   \forall \ y \ s \ (bit\ y \ \supset com_{bit}\ y \ s) \rangle}
	}
	{
		\infer[\supset R \langle \rangle]
		{
			B,S,\Sigma;P,CH\longrightarrow\mathit{\langle   bit\ B \ \supset com_{bit}\ B \ S \rangle}
		}
		{
		\infer[\textsc{dec}\langle \rangle]
		{
			B,S,\Sigma;P,CH, bit\ B \longrightarrow\mathit{\langle   com_{bit}\ B \ S \rangle}
		}
		{	
			\infer[\forall L\left \langle \rangle\ (\text{twice}\right)]
			{		
			B,S,\Sigma;P,CH, bit\ B \stackrel{P\left(9\right)}{\longrightarrow}\mathit{\langle com_{bit}\ B \ S \rangle}
		    }				
			{	
				\infer[\supset L \langle \rangle]
				{
				B,S,\Sigma;P,CH, bit\ B \stackrel{\mathit{(com_{bit} \ B\ (f \ S )\land bit \ B ) \;\supset\; com_{bit} \ B \ S }}{\longrightarrow}\mathit{\langle  com_{bit}\ B \ S \rangle}
			    }
				{	
				  \infer[\textsc{init}]
				  {
				  	B,S,\Sigma;P,CH, bit\ B \stackrel{\mathit{com_{bit}\ B \ S}}{\longrightarrow}\mathit{ com_{bit}\ B \ S }
			  	  }
						{	}           
						&
	\infer[\land R]
	{
	B,S,\Sigma;P,CH, bit\ B \longrightarrow\mathit{ com_{bit} \ B\ (f \ S ) \land bit \ B  }
	}
	{ 
		\infer[\textsc{dec}]
		{
		B,S,\Sigma;P,CH, bit\ B \longrightarrow\mathit{ com_{bit} \ B\ (f \ S ) }
	    }
		{
		\infer[\forall L \ (\text{twice})]
		{
		B,S,\Sigma;P,CH, bit\ B \stackrel{CH}{\longrightarrow}\mathit{ com_{bit} \ B\ (f \ S )}
		}
		{
		\infer[\supset L]
		{
		B,S,\Sigma;P,CH, bit\ B \stackrel{\mathit{bit\ B \supset com_{bit}\ B \ (f \ S )}}{\longrightarrow}\mathit{ com_{bit} \ B\ (f \ S ) }
	    }
		{
		\infer[\textsc{init}]
		{
		B,S,\Sigma;P,CH, bit\ B \stackrel{\mathit{com_{bit}\ B \ (f \ S )}}{\longrightarrow}\mathit{ com_{bit} \ B\ (f \ S ) }
	    }{} 		
					&
		\infer[\textsc{dec}]
		{B,S,\Sigma;P,CH, bit\ B \longrightarrow \mathit{ bit\ B }}		
		{
		\infer[\textsc{init}]
		{B,S,\Sigma;P,CH, bit\ B \stackrel{\mathit{ bit\ B }}{\longrightarrow} \mathit{ bit\ B }}	
					{}
					}
					}}}
					& {Proof^*}
						}
						}
            }
						}	}		} } }
\end{displaymath}
	
\caption{\footnotesize{\textbf{Running example: a coinductive proof for the sequent $\Sigma;P \looparrowright \mathit{\forall \ y \ s \ (bit\ y  \supset comember_{bit}\ y \ s)}$ in \emph{co-fohh}.}   We abbreviate the coinductive hypothesis $\mathit{\forall \ y \ s \ (bit\ y \supset comember_{bit}\ y \ s)}$ by CH, the rules INITIAL and DECIDE by INIT and DEC, and $comember_{bit}$ --- as $com_{bit}$. $Proof^*$ is a copy of the proof for $B,S,\Sigma;P,CH, bit\ B \longrightarrow \mathit{ bit\ B}$ given also in another branch of the proof.  $B$ and $S$ are arbitrary eigenvariables. }}\label{fig:ex4}
	
\end{figure*}

Although the last example looks somewhat artificial in the context of this paper, examples of similar proofs do arise in automated proving with Horn clauses, and in particular in implementation of Haskell type classes, as~\cite{FKS15,BottuKSOW17} report.
Appendix~\ref{sec:FP} gives an  example of a more complex proof in \emph{co-fohh}, taken directly from~\cite{FKS15}.

\section{Soundness of Coinductive Uniform Proofs for Logic Programs}\label{sec:sound}

We start this section by recalling the standard definitions of coinductive   models for logic programs~\cite{Llo87}.
The first step in definition of such models  is to define infinite
 tree-terms and tree-atoms that inhabit such models. We follow standard definitions in this regard~\cite{Courcelle83}, see Appendix~\ref{sec:trees}.
Informally, a tree-term is defined as a map from a set of lists of non-negative integers into $\Sigma$, with tree branching respecting the term arity. 
If the domain of the map is infinite, the tree-term is infinite.
This definition then extends to tree-atoms in an obvious way.

 Given a first order signature $\Sigma$ and a logic program $P_\Sigma$, the 
\emph{coinductive Herbrand universe} of $P_{\Sigma}$, denoted $\mathcal{H}^\Sigma$, is the set of all finite and infinite closed tree-terms on $\Sigma$.
The \emph{coinductive Herbrand base} of $P_{\Sigma}$, 
denoted $\mathcal{B}^\Sigma$, is the set of all finite and infinite closed tree-atoms on $\Sigma$.

We first establish a connection between guarded atoms on $\Sigma$ and
the coinductive Herbrand base of $P_{\Sigma}$. It is a form of  productivity result for guarded fixed-point terms.

If $A$ is a first order atom, then we will denote the equivalent tree-atom by $A^T$. Let $A$ be a guarded atom on $\Sigma$ and let $\diamond:\iota\notin\Sigma$, then a \emph{snapshot} of $A$, denoted $A\diamond$, is an atom obtained 
by replacing all guarded full terms in $A$ with $\diamond$, and it is understood that a guarded full term $N$ in $A$ is replaced by $\diamond$ only if $N$ is \emph{not} a first order term. Clearly, $A\diamond$ is a first order atom. A guarded atom has a \emph{fair} infinite sequence of $\textit{fix}\beta$-reductions if every $\textit{fix}\beta$-reducible sub-term is reduced within a finite number steps. The next lemma relies on a standard definition of a metric $d$ on tree-terms from~\cite{Llo87}. Given $t,s$ as two tree-terms (tree-atoms), $d(t,s)$ denotes the distance between $t,s$, where $0\leq d(t,s) < 1$ and the smaller $d(t,s)$ is, the more similar $t,s$ are (cf.  Appendix~\ref{sec:trees}).

\begin{lemma}[Productivity lemma]\label{lem:prod}
Let $A$ be a guarded atom on a first-order signature $\Sigma$ and $\diamond:\iota\notin\Sigma$. If $A$ has a fair infinite sequence of  one step $\textit{fix}\beta$-reductions $A \rightarrow_{\textit{fix}\beta}  A'_1,\   A_1\rightarrow_{\textit{fix}\beta}  A'_2,\  A_2\rightarrow_{\textit{fix}\beta} A'_3, \ldots$, where $A_k$ is the $\beta$-normal form of $A'_k$, then there exist an infinite tree-atom $A^T$ such that $d((A_k\diamond)^T, A^T) \to 0$ as $k\to\infty$. 
Moreover, if $A$ is a closed guarded atom, then $A^T\in\mathcal{B}^\Sigma$. 
\end{lemma}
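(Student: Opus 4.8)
The plan is to exhibit $A^T$ as the limit of the Cauchy sequence $(A_k\diamond)^T$ in the complete metric space of tree-atoms, and then to identify that limit as a genuine element of the coinductive Herbrand base when $A$ is closed. First I would set up the metric machinery: recall from Appendix~\ref{sec:trees} that the space of tree-terms (tree-atoms) with the metric $d$ is complete, where $d(t,s) < 2^{-n}$ precisely when $t$ and $s$ agree on all positions of depth $\leq n$. So it suffices to show that the finite tree-atoms $(A_k\diamond)^T$ form a Cauchy sequence, i.e.\ that for every depth $n$ there is an index $K$ beyond which all $(A_k\diamond)^T$ agree up to depth $n$; completeness then yields a unique limit $A^T$, and the convergence $d((A_k\diamond)^T, A^T)\to 0$ is immediate.

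The heart of the argument is a \textbf{productivity claim}: each $\textit{fix}\beta$-reduction step, when it unfolds a guarded fixed-point term occurring (after $\beta$-normalisation) at the root of a guarded full subterm, produces at least one new layer of rigid constructor above the residual $\diamond$-marked positions. This is exactly what guardedness (Definition~\ref{defn: guarded fpt}) buys us: a guarded fixed-point term $\mathit{fix}\,\lambda x.\lambda \bar y.\, f\,L_1\ldots L_k\,(x\,N_1\ldots N_m)\,L_{k+1}\ldots L_r$ has the recursive variable $x$ strictly underneath the constructor $f$, with all the $L_i$, $N_j$ first-order (hence $\diamond$-free and already fully determined). So after substitution, $\textit{fix}\beta$-reduction and $\beta$-normalisation, the snapshot $A_{k}\diamond$ refines $A_{k-1}\diamond$: every position that was determined (i.e.\ not $\diamond$) in $(A_{k-1}\diamond)^T$ stays put with the same label in $(A_k\diamond)^T$, and at least the immediate layer below each previously-$\diamond$ position is now pinned down by $f$. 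Formally I would prove, by induction on the term structure and using the rules for one-step $\textit{fix}\beta$-reduction and its congruence closure given in the Preliminaries, that $(A_{k-1}\diamond)^T$ and $(A_k\diamond)^T$ agree on all positions not lying strictly below a $\diamond$ of $A_{k-1}\diamond$, and that the $\diamond$-positions of $A_k\diamond$ are strictly deeper. The \emph{fairness} hypothesis then guarantees that no $\textit{fix}\beta$-redex is postponed forever, so the minimal depth of a $\diamond$ in $A_k\diamond$ tends to $\infty$ as $k\to\infty$; combined with the refinement property this gives the Cauchy condition, and also shows the limit $A^T$ is infinite (it has no $\diamond$ at any depth, and since the sequence is genuinely infinite with redexes always available, the tree keeps growing — a finite tree would force the $\textit{fix}\beta$-sequence to terminate).

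For the final sentence, suppose $A$ is closed. Then each $A_k$ is closed (substitution of closed fixed-point terms and $\beta$-normalisation preserve closedness, and the signature is first-order so no free variables are introduced), hence each snapshot $A_k\diamond$ is a closed first-order atom, so $(A_k\diamond)^T$ is a closed finite tree-atom. Closedness is a property preserved under the metric limit — the set of closed tree-atoms is a closed subset of the space (a tree-atom is non-closed iff it carries a variable at some finite position, and this is detected at finite depth, so its complement is open). Therefore $A^T$, being the limit of closed tree-atoms, is itself a closed tree-atom, and since $\mathcal{B}^\Sigma$ is by definition the set of all finite and infinite closed tree-atoms on $\Sigma$, we conclude $A^T\in\mathcal{B}^\Sigma$.

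I expect the main obstacle to be the bookkeeping in the productivity claim: making precise the interaction between $\textit{fix}\beta$-reduction, $\beta$-normalisation (which can move and duplicate subterms, though not under a $\mathit{fix}$ in a way that creates new redexes, since $\mathit{fix}$ is treated as a constant for $\beta$), and the snapshot operation $(\cdot)\diamond$ — in particular verifying that a $\diamond$ replacing a higher-order guarded full term $F\,M_1\ldots M_n$ is, after one fair round of unfolding $F$, replaced by a term whose root is the rigid constructor $f$ from $F$'s guarded-fixed-point form, with the recursive occurrences pushed strictly deeper. Once that one-step refinement lemma is nailed down, the metric-limit argument and the closedness transfer are routine.
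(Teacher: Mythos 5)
Your proposal is correct and matches the intended argument: the paper itself gives no details here, deferring entirely to Komendantskaya and Li~\cite[Lemma 4.1]{KL17}, whose proof is precisely this kind of metric-limit argument — guardedness forces each fair $\textit{fix}\beta$-unfolding to push the $\diamond$-positions of the snapshots strictly deeper, so the $(A_k\diamond)^T$ form a Cauchy sequence in the complete ultrametric space of tree-atoms, and closedness passes to the limit. Your identification of the one-step refinement claim as the only delicate point is also accurate.
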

\begin{proof}
Similar to the proof in Komendantskaya and Li~ \cite[Lemma 4.1]{KL17}.
\end{proof}

Lemma~\ref{lem:prod} allows us to extend the notation $A^T$, from requiring $A$ be a first order atom, to allowing $A$ to be any guarded atom, and shows that $A^T\in \mathcal{B}^\Sigma$ if $A$ is closed.

We now proceed to consider coinductive models of logic programs.
If a $D$-formula in \emph{(co)-hohc} is in the 
form \[\forall\!_{\iota}x_1\ldots x_m\quad A_1\land \ldots \land A_n \supset A\quad \left(m,n\geq 0\right) \] then it is also called an \emph{$H$-formula}, which is perhaps a better known presentation of Horn clauses.
It is well known that in both classical and intuitionistic logics, a set of Horn clause $D$-formulae can be transformed into an equivalent set of $H$-formulae, and \emph{vice versa}~\cite[\protect\S 2.6.2]{MN12}. 
Also note that, by the discussion of the previous section, this is the only kind of formulae that we can coinductively prove using the \textsc{co-fix} rule. 

 Given an $H$-formula $K=_{\mathit{def}}\forall\!_\iota x_1\ldots x_m\; A_1\land \ldots \land A_n \supset A$, we define:\\
	%
	%
{\footnotesize{\renewcommand{\arraystretch}{1.5}
 \begin{tabular}{p{6.5cm} || p{6.5cm} } 
\hline
	 \emph{tree-form ground instance} $\lfloor K \rfloor^T$:  & \emph{term-form ground instance} $\lfloor K \rfloor$:  \\
	\hline
	$\left(A^T_1\land \ldots \land A^T_n \supset A^T\right)\left[x_1:=N_1\right]\cdots\left[x_m:=N_m\right]$ & $\left(A_1\land \ldots \land A_n \supset A\right)\left[x_1:=N'_1\right]\cdots\left[x_m:=N'_m\right]$\\
	where each $N_k\in{\mathcal{H}^\Sigma}$ & where each $N'_k\in{\mathcal{U}_1^\Sigma}$\\ \hline
\end{tabular} }}\\
%
If $K'=_{\mathit{def}}{A_1'\land \ldots \land A_n' \supset A'}$ is a ground instance (in either tree-form or term-form), then we denote $A'$ by $\mathit{head}\ K$ and denote the set $\{A_1',\ldots, A_n'\}$ by $\mathit{body}\ K$.

A \emph{Herbrand interpretation}, denoted $I$, is any subset of ${\mathcal{B}^\Sigma}$. We denote a powerset of $\mathcal{B}^\Sigma$ by $\mathit{Pow}\left(\mathcal{B}^\Sigma\right)$. It is well known that 
${\langle\mathit{Pow}\left(\mathcal{B}^\Sigma\right), \subseteq\rangle}$ is a complete lattice. 
Appendix~\ref{sec:lattice} recalls some standard lattice-theoretic definitions and results we use. 
The \emph{immediate consequence operator} with respect to a logic program $P_{\Sigma}$,  
$\mathcal{T}:\mathit{Pow}\left(\mathcal{B}^\Sigma\right)\mapsto\mathit{Pow}\left(\mathcal{B}^\Sigma\right)$, is defined as
\begin{align*}
\mathcal{T}\left(I \right)= & \{B\in\mathcal{B}^\Sigma\mid  F\in P_\Sigma,\ \mathit{head}\ \lfloor F \rfloor^T = B,\ \mathit{body}\ \lfloor F \rfloor^T\subseteq I\ \}
\end{align*}

A Herbrand interpretation $I$ is a model of  $P_\Sigma$ if and only if $I$ is a pre-fixed point of $\mathcal{T}$.

 Using the fact that $\mathcal{T}$ is increasing~\cite{Llo87}, we can rely on the Knaster-Tarski theorem (c.f. Appendix~\ref{sec:lattice}) to assert that its greatest fixed point exists:
\begin{alignat*}{4}
\mathcal{M}  &=  \mathit{gfp}\left(\mathcal{T}\right) & ={\bigcup}\{I\mid I= \mathcal{T}\left(I\right) \} &={\bigcup}\{I\mid  I \subseteq  \mathcal{T}\left(I\right) \} 
\end{alignat*}
We say that  
$\mathcal{M}$ is a \emph{coinductive  model} of $P$. 



\noindent In order to use these models for our main soundness results, we first need to formulate a coinductive principle for the proofs involving these models.
The implication from right to left in Lemma 
\ref{lem: coin prf pcpl infi} is an instance of the \emph{coinductive proof principle}, as formulated e.g. in Sangiorgi~\cite[\S 2.4]{Sangiorgi:2011:IBC:2103603}: 



\begin{lemma}[Coinductive proof principle for coinductive models]\label{lem: coin prf pcpl infi}
	Let $P_{\Sigma}$ be a logic program, with the operator $\mathcal{T}$ and 
	the model $\mathcal{M}$. Given a set $S$,  
	$S\subseteq\mathcal{M}$, if and only if,  there exist a  Herbrand interpretation  $I$ for $P_{\Sigma}$, such that $S\subseteq I$ and $I$ is a post-fixed point of $\mathcal{T}$.  
\end{lemma}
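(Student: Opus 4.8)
The plan is to prove this as a direct consequence of the definition of $\mathcal{M}$ as the greatest fixed point of $\mathcal{T}$, using the Knaster--Tarski characterisation already recalled in the excerpt. Recall that $\mathcal{T}$ is monotone (the excerpt states $\mathcal{T}$ is increasing, citing~\cite{Llo87}), so the Knaster--Tarski theorem gives $\mathcal{M} = \mathit{gfp}(\mathcal{T}) = \bigcup\{I \mid I \subseteq \mathcal{T}(I)\}$, i.e. $\mathcal{M}$ is the union of all post-fixed points of $\mathcal{T}$.

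For the forward direction, assume $S \subseteq \mathcal{M}$. Then I would simply take $I = \mathcal{M}$ itself: it is a Herbrand interpretation (a subset of $\mathcal{B}^\Sigma$), it contains $S$ by assumption, and since $\mathcal{M}$ is a fixed point of $\mathcal{T}$, in particular $\mathcal{M} \subseteq \mathcal{T}(\mathcal{M})$, so it is a post-fixed point. For the reverse direction, assume there is a Herbrand interpretation $I$ with $S \subseteq I$ and $I \subseteq \mathcal{T}(I)$. Since $\mathcal{M} = \bigcup\{J \mid J \subseteq \mathcal{T}(J)\}$ and $I$ is one of the sets in this union, we get $I \subseteq \mathcal{M}$, hence $S \subseteq I \subseteq \mathcal{M}$, as required. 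Both directions are essentially one line each once the Knaster--Tarski characterisation of $\mathcal{M}$ is invoked.

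There is really no substantial obstacle here: the lemma is a repackaging of the standard coinductive proof principle (as the excerpt itself acknowledges by citing Sangiorgi), and the only things being used are (i) monotonicity of $\mathcal{T}$, which is cited, and (ii) the fact that $\mathcal{M}$ coincides with the union of all post-fixed points of $\mathcal{T}$, which is exactly the displayed equation for $\mathcal{M}$ in the excerpt. If anything deserves care, it is merely making explicit that "post-fixed point of $\mathcal{T}$" means $I \subseteq \mathcal{T}(I)$ and checking that the union $\bigcup\{I \mid I \subseteq \mathcal{T}(I)\}$ is itself a post-fixed point (which follows from monotonicity), so that using $\mathcal{M}$ as the witness in the forward direction is legitimate. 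I would state these two facts as a brief appeal to Appendix~\ref{sec:lattice} and then give the two-line argument above.
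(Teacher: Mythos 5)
Your proof is correct, and the right-to-left direction is exactly the paper's (both just observe that a post-fixed point $I$ is one of the sets in the union $\bigcup\{J \mid J \subseteq \mathcal{T}(J)\} = \mathcal{M}$). The two arguments diverge only in the forward direction. You take the witness $I = \mathcal{M}$ itself, justified by the Knaster--Tarski fact that $\mathcal{M} = \mathit{gfp}(\mathcal{T})$ is a fixed point and hence a post-fixed point; this is the shortest possible route and is fully supported by the displayed characterisation of $\mathcal{M}$ and the lattice-theoretic appendix. The paper instead builds the witness from scratch: for each $x \in S$ it picks some post-fixed point $I_x$ containing $x$ (which exists because $\mathcal{M}$ is the union of all post-fixed points), sets $I = \bigcup_{x\in S} I_x$, and then re-derives that this union is a post-fixed point using monotonicity of $\mathcal{T}$. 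The paper's construction is slightly more laborious but is self-contained in the sense that it never needs to assert that $\mathcal{M}$ itself satisfies $\mathcal{M} \subseteq \mathcal{T}(\mathcal{M})$; your version leans on that fact but gets it for free from Knaster--Tarski, which the paper has already invoked to define $\mathcal{M}$. Either way the lemma goes through; there is no gap in your argument.
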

	The proof is given in Appendix~\ref{sec:sproof}.

We are now ready to formulate soundness of coinductive uniform proofs. We prove the result for \emph{co-hohh}, but, because proofs in \emph{co-fohc}, \emph{co-fohh} are \emph{co-hohc} are, by definition, also proofs in \emph{co-hohh}, their soundness
relative to coinductive models follows as a corollary of this theorem.

\begin{theorem}[Soundness of  \protect\textit{co-hohh} proofs]\label{thm: main soundess infinite}
	Let
		 $P_{\Sigma}$ be a logic program with a coinductive model $\mathcal{M}$,
		 ${\forall\!_{\iota}x_1\ldots x_m\ A_1\land \ldots \land A_n \supset A}$ be a $H$-formula, and 
		 $\Sigma;P\looparrowright \forall\!_{\iota}x_1\ldots x_m\ A_1\land \ldots \land A_n \supset A$ have a  proof in \textit{co-hohh} which involves only guarded atoms. 
	Then, for an arbitrary tree-form ground instance ${A'_1\land \ldots \land A'_n \supset A'}$ of the goal, $\{A'_1,\ldots A'_n\}\subseteq\mathcal{M}$ implies $\{A'\}\subseteq\mathcal{M}$.
\end{theorem}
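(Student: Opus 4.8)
The plan is to reduce soundness of the \textsc{co-fix} rule to the coinductive proof principle of Lemma~\ref{lem: coin prf pcpl infi}. Given the \emph{co-hohh} proof of $\Sigma;P\looparrowright \forall\!_{\iota}x_1\ldots x_m\ A_1\land \ldots \land A_n \supset A$, its top rule is \textsc{co-fix}, whose premise is $\Sigma;P,M\longrightarrow\langle M\rangle$ with $M$ the $H$-formula. The idea is to collect, from this guarded derivation, the set $S$ of all tree-form ground instances of the coinductive goal $M$ that the proof ``commits to'', together with the tree-atoms that the non-guarded (ordinary $\longrightarrow$) sub-derivations establish outright from $P$ alone; then show that $I =_{\mathit{def}} S \cup \mathcal{M}$ (or more precisely the closure of $S$ under the immediate consequences licensed by $P$) is a post-fixed point of $\mathcal{T}$ containing every tree-form ground instance head we care about. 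By Lemma~\ref{lem: coin prf pcpl infi}, this yields $S\subseteq\mathcal{M}$, and in particular, for the ground instance ${A'_1\land \ldots \land A'_n \supset A'}$ of the goal, whenever $\{A'_1,\ldots,A'_n\}\subseteq\mathcal{M}$ we get $A'\in\mathcal{M}$.

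First I would make precise what ``committing to'' means by analysing the guarded sub-derivation $\Sigma;P,M\longrightarrow\langle M\rangle$ by induction on its structure. The guarded right rules $\supset\! R\langle\rangle$, $\forall R\langle\rangle$, $\land R\langle\rangle$ peel $M$ down to guarded atomic coinductive goals $\langle A\rangle$ under some eigenvariable context and some added antecedents (the $M_i$ from $\supset\! R\langle\rangle$); each such $\langle A\rangle$ must then be closed by $\textsc{decide}\langle\rangle$ followed by guarded left rules and $\textsc{initial}\langle\rangle$ or $\textsc{initial}$. The key structural fact, enforced by the $D^*$ restriction on $\textsc{decide}\langle\rangle$ (it must use an \emph{original} clause of $P$, not the coinductive hypothesis or an $\supset\! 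R\langle\rangle$-added antecedent), is that before the coinductive hypothesis $M$ can be re-used it must be ``guarded'' through at least one program clause — i.e.\ one $\textit{fix}\beta$-productive step, matching the pattern of Definition~\ref{defn: guarded fpt}. This is exactly where the Productivity Lemma (Lemma~\ref{lem:prod}) enters: the chain of coinductive-goal re-entries along any infinite path through (repeated conceptual unfoldings of) the proof generates a fair infinite sequence of $\textit{fix}\beta$-reductions, hence converges in the metric $d$ to a genuine element $A^T\in\mathcal{B}^\Sigma$. So $S$ is well-defined as a subset of $\mathcal{B}^\Sigma$: it consists of the limit tree-atoms together with all tree-atoms $B$ such that some $\textsc{decide}/\textsc{initial}$-closed ordinary sub-derivation $\Sigma;P,M\longrightarrow B'$ with $B=_{\textit{fix}\beta}B'$ occurs, restricted to those whose side goals are already in $\mathcal{M}$.

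Next I would verify that $I = S\cup\mathcal{M}$ is a post-fixed point of $\mathcal{T}$, i.e.\ $I\subseteq\mathcal{T}(I)$. For elements of $\mathcal{M}$ this is immediate since $\mathcal{M}=\mathcal{T}(\mathcal{M})\subseteq\mathcal{T}(I)$ by monotonicity. For $B\in S$ coming from a $\textsc{decide}\langle\rangle$ step on clause $F\in P$: tracing the guarded left rules $\forall L\langle\rangle$, $\land L\langle\rangle$, $\supset\! L\langle\rangle$ down to $\textsc{initial}\langle\rangle$ shows that $B$ is (fix$\beta$-equivalently) the head of some tree-form ground instance of $F$, and each atom of its body is discharged either by $\textsc{initial}\langle\rangle$ against the marked coinductive-goal copy of $M$ — contributing another element of $S$ via the recursively generated sub-derivation — or by an ordinary $\Sigma;P\longrightarrow G$ derivation whose provable atoms, by soundness of the ordinary uniform-proof rules relative to \emph{any} model of $P$ (a routine induction, since those rules are just the standard ones of Figure~\ref{fig: inductive uniform proof}), lie in every model, in particular in $\mathcal{M}\subseteq I$; here the side condition ``$\{A'_1,\ldots,A'_n\}\subseteq\mathcal{M}$'' is what lets us discharge the uses of the coinductive hypothesis $M$ itself at the outermost level. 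Hence $\mathit{body}\,\lfloor F\rfloor^T\subseteq I$, so $B\in\mathcal{T}(I)$. Care must be taken with the eigenvariables introduced by $\forall R\langle\rangle$: since the signature and program are first-order and the final goal is closed, a standard substitution/genericity argument shows the eigenvariable-parametrised sub-derivations instantiate correctly to all tree-form ground instances.

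The main obstacle, I expect, is making the ``recursively generated'' set $S$ genuinely well-founded as a definition while still capturing the \emph{infinite} behaviour the coinductive hypothesis encodes — in other words, precisely bridging the finite proof tree (the paper insists proofs are finite trees) and the infinite object in $\mathcal{B}^\Sigma$. The clean way is: do \emph{not} try to build $S$ by a well-founded recursion on the proof; instead, \emph{define} $S$ as the set of all tree-atoms $A^T$ arising as $d$-limits of $\textit{fix}\beta$-reduction sequences obtained by following the finite cycle in the proof (the coinductive hypothesis feeding back into itself through $\textsc{decide}\langle\rangle$) repeatedly, appeal to Lemma~\ref{lem:prod} for their existence and membership in $\mathcal{B}^\Sigma$, and then prove the single closure inequality $S\cup\mathcal{M}\subseteq\mathcal{T}(S\cup\mathcal{M})$ by one induction on the guarded sub-derivation. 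Guardedness ($D^*$ restriction plus the $\langle\rangle$-discipline) is exactly what guarantees that this one unfolding step is always productive, so the induction goes through; unguarded use of $M$ would break the post-fixed-point inequality, which is the semantic reason the $\langle\rangle$ machinery is needed. Finishing the theorem is then just one application of Lemma~\ref{lem: coin prf pcpl infi} to $S$.
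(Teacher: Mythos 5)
Your overall architecture is the same as the paper's: apply Lemma~\ref{lem: coin prf pcpl infi} from right to left, extract a candidate post-fixed point $I$ from the finite guarded derivation, use the $D^*$ restriction on $\textsc{decide}\langle\rangle$ plus Lemma~\ref{lem:prod} to justify that the tree-atoms exist, and verify $I\subseteq\mathcal{T}(I)$. (Taking $\mathcal{M}$ itself as the second component of $I$ in place of an extracted post-fixed point $I_2\supseteq\{A'_1,\ldots,A'_n\}$ is a harmless simplification.) However, the construction of $S$ is under-specified at exactly the point where the real work lies, and one step of your verification is wrong as stated. After $\supset\! L\langle\rangle$ the body subgoals are discharged in an \emph{ordinary} sub-derivation whose antecedent is $P\cup\{\mathit{ch},\mathit{hp}\}$, not $P$; inside it, $\textsc{decide}$ may select the coinductive hypothesis $\mathit{ch}$ (under a fresh substitution $\delta$ produced by the $\forall L$ steps) or the added hypothesis $\mathit{hp}$. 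Your dichotomy ``closed by $\textsc{initial}\langle\rangle$ against the coinductive goal, or provable from $P$ alone and hence in every model'' misses both cases, and the appeal to soundness of ordinary uniform proofs relative to models of $P$ does not apply to derivations that use $\mathit{ch}$ or $\mathit{hp}$ as clauses. The $\mathit{hp}$ case is precisely where the hypothesis $\{A'_1,\ldots,A'_n\}\subseteq\mathcal{M}$ must be invoked, and the $\mathit{ch}$ case is the crux of the whole theorem.

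The missing idea is how to handle that $\mathit{ch}$ case when the re-entry substitution is not the identity (the irregular proofs that motivate \emph{co-fohh}/\emph{co-hohh}, e.g.\ clause (8), where $\mathit{ch}$ is re-used at $x:=s\,Z$). Then witnessing $(A\overline{[x:=c]})^T\Theta\in\mathcal{T}(I)$ requires the instance at $\delta\circ\Theta$ to already be in $I$, which in turn requires the instance at $\delta\circ\delta\circ\Theta$, and so on: $I$ must be closed under all finite compositions of the substitutions $\delta_1,\ldots,\delta_s$ arising from the $s$ uses of $\mathit{ch}$, applied to \emph{all} right-hand atoms of the sub-derivation (not only the coinductive goal, since they appear in clause bodies). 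The paper makes this precise with a recursively defined family of substitutions indexed by finite words over $\{1,\ldots,s\}$ and sets $I_1=\bigcup_w I^c\Theta(w)$; the verification then terminates by descending on the word index. This source of infinity is orthogonal to the $\textit{fix}\beta$-limit of Lemma~\ref{lem:prod}, which only produces the tree-form of each \emph{individual} guarded atom; your description of $S$ as ``$d$-limits of $\textit{fix}\beta$-reduction sequences obtained by following the cycle repeatedly'' conflates the two, and without the substitution-closure construction the inequality $S\cup\mathcal{M}\subseteq\mathcal{T}(S\cup\mathcal{M})$ fails for any irregular proof.
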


\begin{proof}[Sketch]
We use Lemma~\ref{lem: coin prf pcpl infi} from right to left, which means, to show that $\{A'\}\subseteq \mathcal{M}$, we look for a set  $I$ such that the \emph{requirements}
${\{A'\}\subseteq I}$ and
${I \subseteq \mathcal{T}\left(I\right)}$
 are satisfied. The proof follows an \emph{Analysis--Construction--Verification} structure, where we first study the proof of the \emph{root sequent} $\Sigma;P\looparrowright \forall\!_{\iota}x_1\ldots x_m\ A_1\land \ldots \land A_n \supset A$ which provides information for constructing a candidate set $I$. 
In this construction, we use Lemma~\ref{lem:prod}.
Finally we verify that the set $I$ so constructed satisfies the \emph{requirements}.
The exact details of these three steps are given in Appendix~\ref{sec:sproof}.
\end{proof}

Finally, we show that extending logic programs with coinductively proven lemmas is sound.

\begin{theorem}[Conservative model extension]
	\label{them: cut theorem infi model}
	Let a
		logic program $P_\Sigma$ have coinductive model $\mathcal{M}$, and a
		sequent $\Sigma;P\looparrowright H$ have a \textit{co-hohh} proof that only involves guarded atoms.
		Let $H_1,\ldots, H_n$ be distinct term-form ground instances of $H$ involving only guarded atoms, and such that for each $H_k$ ($1\leq k\leq n$ ),   if  
		 $A\in\mathit{body}\, H_k$, then  $A^T\in\mathcal{M}$. Let $P\cup\{H_1,\ldots, H_n\}$ have a coinductive model $\mathcal{M}'$.
	Then, $\mathcal{M}=\mathcal{M}'$.
\end{theorem}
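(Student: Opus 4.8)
The plan is to prove the two inclusions $\mathcal{M}\subseteq\mathcal{M}'$ and $\mathcal{M}'\subseteq\mathcal{M}$ separately, working throughout with the greatest-fixed-point descriptions of the two coinductive models. Write $P'=P\cup\{H_1,\ldots,H_n\}$ and let $\mathcal{T}$ and $\mathcal{T}'$ be the immediate consequence operators of $P$ and of $P'$ over $\Sigma$. Since each $H_k$ is ground it has the unique tree-form ground instance $\lfloor H_k\rfloor^T$, and unfolding the definition of the immediate consequence operator gives, for every Herbrand interpretation $I$, the pointwise inclusion $\mathcal{T}(I)\subseteq\mathcal{T}'(I)$; more precisely $\mathcal{T}'(I)=\mathcal{T}(I)\cup\{\,\mathit{head}\,\lfloor H_k\rfloor^T \mid 1\le k\le n,\ \mathit{body}\,\lfloor H_k\rfloor^T\subseteq I\,\}$.

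For $\mathcal{M}\subseteq\mathcal{M}'$: from $\mathcal{M}=\mathcal{T}(\mathcal{M})$ and the pointwise inclusion we get $\mathcal{M}\subseteq\mathcal{T}'(\mathcal{M})$, so $\mathcal{M}$ is a post-fixed point of $\mathcal{T}'$, and the Knaster--Tarski characterisation of greatest fixed points (Appendix~\ref{sec:lattice}) yields $\mathcal{M}\subseteq\mathit{gfp}(\mathcal{T}')=\mathcal{M}'$.

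For $\mathcal{M}'\subseteq\mathcal{M}$: by Lemma~\ref{lem: coin prf pcpl infi} (coinductive proof principle), read from right to left with $S=\mathcal{M}'$, it is enough to show $\mathcal{M}'\subseteq\mathcal{T}(\mathcal{M}')$, i.e.\ that $\mathcal{M}'$ is a post-fixed point of $\mathcal{T}$. Fix $k$ and write $H_k=A^k_1\land\cdots\land A^k_n\supset A^k$; by hypothesis every atom of $H_k$ is guarded, so by Lemma~\ref{lem:prod} and the extension of the $(-)^T$ notation it licenses, $(A^k_i)^T$ and $(A^k)^T$ lie in $\mathcal{B}^\Sigma$, and — because $(-)^T$ commutes with substitution and the terms $H_k$ substitutes for the universally quantified variables of $H$ are guarded full terms, hence have $(-)^T$-images in $\mathcal{H}^\Sigma$ — the clause $(A^k_1)^T\land\cdots\land(A^k_n)^T\supset(A^k)^T$ is a tree-form ground instance of the goal $H$. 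The hypothesis ``$A\in\mathit{body}\,H_k$ implies $A^T\in\mathcal{M}$'' is precisely $\{(A^k_1)^T,\ldots,(A^k_n)^T\}\subseteq\mathcal{M}$, so Theorem~\ref{thm: main soundess infinite} applied to the given \emph{co-hohh} proof of $\Sigma;P\looparrowright H$ gives $(A^k)^T\in\mathcal{M}$. Since $\mathit{head}\,\lfloor H_k\rfloor^T=(A^k)^T$, the description of $\mathcal{T}'$ above yields $\mathcal{T}'(\mathcal{M}')\subseteq\mathcal{T}(\mathcal{M}')\cup\{(A^k)^T\mid 1\le k\le n\}\subseteq\mathcal{T}(\mathcal{M}')\cup\mathcal{M}$; and by the already-established inclusion $\mathcal{M}\subseteq\mathcal{M}'$ together with monotonicity of $\mathcal{T}$ we have $\mathcal{M}=\mathcal{T}(\mathcal{M})\subseteq\mathcal{T}(\mathcal{M}')$. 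Hence $\mathcal{M}'=\mathcal{T}'(\mathcal{M}')\subseteq\mathcal{T}(\mathcal{M}')$, and the two inclusions together give $\mathcal{M}=\mathcal{M}'$.

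The one step that needs care — everything else being routine lattice theory plus the cited soundness theorem — is the bridge showing that a term-form ground instance $H_k$ all of whose atoms are guarded determines a genuine tree-form ground instance of $H$, so that Theorem~\ref{thm: main soundess infinite} is applicable. This rests on the Productivity Lemma~\ref{lem:prod}, to place the $(-)^T$-images of guarded atoms in $\mathcal{B}^\Sigma$ and of their guarded full argument terms in $\mathcal{H}^\Sigma$, and on the commutation of $(-)^T$ with substitution, which is a straightforward induction on term structure; I would either cite it from the preliminaries or record it as a small lemma before the main argument.
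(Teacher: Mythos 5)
Your proof is correct, and for the harder inclusion it takes a genuinely different route from the paper's. The inclusion $\mathcal{M}\subseteq\mathcal{M}'$ is handled essentially the same way in both arguments (every post-fixed point of $\mathcal{T}$ is a post-fixed point of $\mathcal{T}'$). For $\mathcal{M}'\subseteq\mathcal{M}$, the paper reasons about an \emph{arbitrary} post-fixed point $I$ of $\mathcal{T}'$: it isolates the finitely many ``outstanding members'' of $I$ (heads of some $H_k$ whose membership is not supported by any clause of $P$ over $I$), shows via Theorem~\ref{thm: main soundess infinite} and Lemma~\ref{lem: coin prf pcpl infi} that these lie in some post-fixed point $I_x$ of $\mathcal{T}$, and then checks that $I\cup I_x$ is a post-fixed point of $\mathcal{T}$ containing $I$. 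You instead work with $\mathcal{M}'$ alone: from the fixed-point equation $\mathcal{M}'=\mathcal{T}'(\mathcal{M}')$, the decomposition $\mathcal{T}'(I)=\mathcal{T}(I)\cup\{\mathit{head}\,\lfloor H_k\rfloor^T\mid \mathit{body}\,\lfloor H_k\rfloor^T\subseteq I\}$, Theorem~\ref{thm: main soundess infinite} to place the extra heads in $\mathcal{M}$, and the already-established facts $\mathcal{M}\subseteq\mathcal{M}'$ and $\mathcal{M}=\mathcal{T}(\mathcal{M})\subseteq\mathcal{T}(\mathcal{M}')$, you conclude $\mathcal{M}'\subseteq\mathcal{T}(\mathcal{M}')$ and finish by the Knaster--Tarski characterisation. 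Your version avoids the case analysis and the ad hoc enlargement $I\cup I_x$, and is the cleaner argument; the paper's version proves the marginally stronger local fact that every post-fixed point of $\mathcal{T}'$ is contained in a post-fixed point of $\mathcal{T}$, though that extra generality is not needed for the statement. Your explicit observation that a term-form ground instance $H_k$ whose atoms are guarded induces a genuine tree-form ground instance of $H$ --- resting on Lemma~\ref{lem:prod} and on commutation of $(-)^T$ with substitution --- makes precise a step the paper also relies on but leaves implicit; recording it as a small auxiliary lemma, as you propose, is worthwhile.
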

\begin{proof}[Sketch]
	The proof shows equality of the two sets by proving that they are each other's subset. Theorem~\ref{thm: main soundess infinite} and Lemma~\ref{lem: coin prf pcpl infi} are used. The exact details are given in Appendix~\ref{sec:sproof}.
	\end{proof}
By Theorem~\ref{thm: main soundess infinite}, proofs for logic programs $P$ and $P\cup H$ are sound relative to their coinductive models. But, subject to conditions of Theorem~\ref{them: cut theorem infi model}, the models of these two programs are equivalent. Hence, proofs for $P \cup H$ are coinductively sound relative to the coinductive model of $P$. 

\section{Conclusions, Discussion, Future and Related Work}\label{sec:rw}

We have presented a sound method for proving coinductive properties of theories expressed in 
first-order Horn clause logic. We used the four calculi of the ``uniform proof diamond" 
to distinguish important classes of these coinductive properties. The major division is between classes of properties involving (or not involving)
infinite data structures, 
accommodated by higher-order/first-order division of the diamond. 
Then classification further splits into regular and irregular cases of coinductive reasoning, 
accommodated by Horn clause logic/hereditary Harrop clause logic parts of the diamond. 

The most intriguing future direction is to explore how these results may be applied in automated 
reasoning systems based on first-order logic~\cite{BjornerGMR15}.
 The methods presented here are 
different from most popular coinductive methods, such as~\cite{Coq94,Gimenez98,BlanchetteM0T17}, which are based on higher-order logic and/or dependent type theory with (co)inductive data types. 
%
%
For example, in Coq we would need to implement clauses (3), (6), (8), (9) as coinductive types, each clause would be inhabited by a constant proof term, seen by Coq as a coinductive type constructor (the code is given in Appendix~\ref{sec:Coq}).   Coq's way of ensuring soundness of these proofs is type-checking backed by the guardedness checks at the proof term level. Proof terms that inhabit proven propositions must be guarded by constructors of the coinductive type in question.  In comparison, we  formulated the coinductive calculi  without any notion of (co)inductive data types, or any generation of proof terms. Instead of guarding proof terms, we use rules of Figure~\ref{fig: rules with angles} to guard applications of the coinductive hypotheses.
When working with definitions of infinite streams, we introduce only minimal guardedness checks on fixed point terms. 

In this respect, of particular interest is the clean separation  between coinductive proofs
involving (or not involving)  fixed point terms made by the coinductive diamond. Thus, depending on first-order theory in question, we can now 
choose to work with one or the other kind of coinduction:
if a certain prover's syntax does not admit fixed point terms, there are still \emph{co-fohc} and \emph{co-hohh} available for it.  
This differs from majority of automated coinductive methods~\cite{BlanchetteM0T17,LeinoM14}, that were proposed with (co)inductive data structures in mind.
This paper shows that having coinductive data structures is not at all a pre-requisite for having sound coinductive proofs.

An interesting direction for future work is to generalise our definition of guarded fixed point terms, e.g. benefiting from 
recent new methods by Blanchette et al.~\cite{BlanchetteBL0T17}. Adding data structures explicitly to the syntax of unform proofs is a related task,
and deserves attention. In a  sense, a similar work has already been done in a richer system -- Abella~\cite{BaeldeCGMNTW14}. 
Compared to these richer languages, our current paper helped us to explain and systematise 
the kinds of coinductive reasoning 
available for first-order Horn clauses. 

The coinductive uniform proofs  cannot capture all coinductive properties arising in first-order Horn clause logic. Consider the following logic program defining a Fibonacci stream:

\vspace*{0.05in}
\noindent
(10) $\forall x\ (add\ 0\ x\ x) $\\
(11) $\forall x\ y\ z\ (add\ x\ y\ z  \;\supset\; add \ (s\ x)\ y\ (s\ z) )$\\
(12) $\forall x\ y\ z\ ((add\ x\ y\ z \land fibs \ y\ z\ s) \;\supset\; fibs\ x\ y\ [ x\ | s ] )$
\vspace*{0.05in} 

\noindent The syntax of \emph{co-hohh} would allow us to express the coinductive property $\forall x\ y\ z\ (add\ x\ y\ z\ \supset \ fibs \ x\ y\ [x | fibs\ y\ z])$, where $fibs$ is a guarded fixed-point term (relying on a function $+$). We would even be able to  safely apply the coinductive hypothesis in the proof. However, the problem arises with inductive parts of this proof. It inevitably stumbles upon having to prove and use a relatively trivial property 
$\forall x \ y\ add\ x\ y\ (x+y) $. However, it can only be proven by induction, and the original formulation of uniform proofs does not admit the inductive proof principle~\cite{MN12}. Stating this property as part of the logic program would violate our assumption that the given logic program is first-order ($+$ is defined by $\lambda$-abstraction).
This suggests two future directions.

Firstly, it would be useful to complement our coinductive proof principle with an inductive one. Inductive proof principle has been
introduced in a similar framework in~\cite{McDowellM00,BaeldeCGMNTW14}. With this additional tool at hand, the question of mixing induction and coinduction will arise, and several existing methods, such as those given in~\cite{B18}, may prove useful. 

Alternatively, we could lift our current restrictions and allow the syntax of \emph{co-hohh}  in  logic programs.
Section~\ref{sec:cup}  formulates the four coinductive calculi with this generalisation in mind.
However, to prove soundness of these logics, 
we would need to use a much more sophisticated notion of a coinductive model, which was beyond the scope of this paper. Categorical models for mixed induction and coinduction by Basold~\cite{B18,BasoldG16} seem promising in this respect.

In this paper, we only worked with one definition of fixed-point operator at the term level. A solid body of literature exists on introducing both fixed point and co-fixed point operators (also sometimes known as $\mu$ and $\nu$ operators) to calculi similar to ours:~\cite{BaeldeN12,ba08,BrotherstonS11,B18}. 
We plan to look into these methods when extending the notion of a logic program to \emph{co-hohh}, and/or extending our four calculi with 
the inductive proof principle.
 

An orthogonal, but promising direction would be to give a Curry-Howard interpretation to our calculi, as was done already for \emph{co-fohc} and \emph{co-fohh} in~\cite{FKS15}. There, Horn clauses are seen as types, and terms inhabiting these types are constructed alongside the proof rule applications. 
Note that all four calculi we have introduced here are intuitionistic and hence in principle should allow constructive interpretation.
Extension of \cite{FKS15} to \emph{co-hohc} and \emph{co-hohh} would require dependent types, similar to~\cite{B18,BasoldG16}. This work could help to automate
coinductive proofs in interactive theorem provers (such as Coq or Agda), which are based on constructive type theory. 

Finally, we plan  a more systematic study of syntactic and semantics approaches to the cut rule in presence of (mixed) inductive and coinductive proofs,
in order to better relate results of Section~\ref{sec:sound} to the existing literature on the subject~\cite{McDowellM00}.   



{\footnotesize{
\bibliographystyle{plain}
\bibliography{katya2}}}

\begin{thebibliography}{10}

\bibitem{ba08}
D.~Baelde.
\newblock {\em A Linear Approach to the Proof-Theory of Least and Greatest
  Fixed Points}.
\newblock PhD thesis, Ecole Polytechnique, Paris, 2008.

\bibitem{BaeldeCGMNTW14}
D.~Baelde et~al.
\newblock Abella: {A} system for reasoning about relational specifications.
\newblock {\em J. Formalized Reasoning}, 7(2):1--89, 2014.

\bibitem{BaeldeN12}
D.~Baelde and G.~Nadathur.
\newblock Combining deduction modulo and logics of fixed-point definitions.
\newblock In {\em Proceedings {LICS} 2012}, pages 105--114. {IEEE} Computer
  Society, 2012.

\bibitem{B18}
H.~Basold.
\newblock {\em Mixed Inductive-Coinductive Reasoning: Types, Programs and
  Logic}.
\newblock PhD thesis, 2018.

\bibitem{BasoldG16}
H.~Basold and H.~Geuvers.
\newblock Type theory based on dependent inductive and coinductive types.
\newblock In {\em {LICS} '16}, pages 327--336, 2016.

\bibitem{BKL19}
H.~Basold, E.~Komendantskaya, and Y.Li.
\newblock Coinduction in uniform: Foundations for corecursive proof search with
  {H}orn clauses.
\newblock In {\em ESOP'19}, 2019.

\bibitem{BK08}
Y.~Bertot and E.~Komendantskaya.
\newblock Inductive and coinductive components of corecursive functions in
  {C}oq.
\newblock {\em ENTSC}, 203(5):25--47, 2008.

\bibitem{BjornerGMR15}
N.~Bj{\o}rner et~al.
\newblock Horn clause solvers for program verification.
\newblock In {\em Fields of Logic and Computation {II}}, volume 9300 of {\em
  LNCS}, pages 24--51, 2015.

\bibitem{BlanchetteM0T17}
J.~Blanchette et~al.
\newblock Foundational nonuniform (co)datatypes for higher-order logic.
\newblock In {\em {LICS}'17}, pages 1--12. {IEEE} Computer Society, 2017.

\bibitem{BlanchetteBL0T17}
J.~Blanchette et~al.
\newblock Friends with benefits - implementing corecursion in foundational
  proof assistants.
\newblock In {\em {ESOP} 2017}, volume 10201 of {\em LNCS}, pages 111--140,
  2017.

\bibitem{BottuKSOW17}
G.{-}J. Bottu et~al.
\newblock Quantified class constraints.
\newblock In {\em Haskell Symp.}, pages 148--161. {ACM}, 2017.

\bibitem{BrotherstonS11}
J.~Brotherston and A.~Simpson.
\newblock Sequent calculi for induction and infinite descent.
\newblock {\em JLC}, 21(6):1177--1216, 2011.

\bibitem{BurnOR17}
T.~C. Burn et~al.
\newblock Higher-order constrained horn clauses and refinement types.
\newblock In {\em POPL'2018}, 2018.

\bibitem{Coq94}
T.~Coquand.
\newblock Infinite objects in type theory.
\newblock In {\em TYPES'93}, volume 806, pages 62--78, 1994.

\bibitem{Courcelle83}
B.~Courcelle.
\newblock Fundamental properties of infinite trees.
\newblock {\em {TCS}}, 25:95--169, 1983.

\bibitem{EO93}
N.~Eisinger and H.J. Ohlback.
\newblock Deduction systems based on resolution.
\newblock In {\em Handbook of logic in artificial intelligence and logic
  programming (vol. 1)}, pages 184 -- 271. Oxford Uni Press, 1993.

\bibitem{FKH16}
F.~Farka et~al.
\newblock Coinductive soundness of corecursive type class resolution.
\newblock In {\em Post-proc. of {LOPSTR}'16}, volume 10184 of {\em LNCS}, pages
  311--327. Springer, 2017.

\bibitem{FKS15}
P.~Fu et~al.
\newblock Proof relevant corecursive resolution.
\newblock In {\em FLOPS'16}, pages 126--143. Springer, 2016.

\bibitem{Gimenez98}
E.~Gim{\'e}nez.
\newblock Structural recursive definitions in type theory.
\newblock In {\em ICALP'98}, pages 397--408, 1998.

\bibitem{GuptaBMSM07}
G.~Gupta et~al.
\newblock Coinductive logic programming and its applications.
\newblock In {\em ICALP'07}, pages 27--44, 2007.

\bibitem{HashimotoU15}
K.~Hashimoto and H.~Unno.
\newblock Refinement type inference via horn constraint optimization.
\newblock In {\em {SAS}}, pages 199--216, 2015.

\bibitem{KL17}
E.~Komendantskaya and Y.~Li.
\newblock Productive corecursion in logic programming.
\newblock {\em J. {TPLP} (ICLP'17 post-proc.)}, 17(5-6):906--923, 2017.

\bibitem{LammelJ05}
R.~L{\"{a}}mmel and S.~L.~Peyton Jones.
\newblock Scrap your boilerplate with class: extensible generic functions.
\newblock In {\em ICFP'15}, pages 204--215. {ACM}, 2005.

\bibitem{LeinoM14}
K.~R.~M. Leino and M.~Moskal.
\newblock Co-induction simply - automatic co-inductive proofs in a program
  verifier.
\newblock In {\em {FM} 2014}, volume 8442 of {\em LNCS}, pages 382--398.
  Springer, 2014.

\bibitem{Llo87}
J.~W. Lloyd.
\newblock {\em Foundations of Logic Programming}.
\newblock Springer-Verlag, 2nd edition, 1987.

\bibitem{McDowellM00}
R.~McDowell and D.~Miller.
\newblock Cut-elimination for a logic with definitions and induction.
\newblock {\em {TCS}}, 232(1-2):91--119, 2000.

\bibitem{MNPS91}
D.~Miller et~al.
\newblock {\em Uniform Proofs as a Foundation for Logic Programming}, volume~51
  of {\em Annals of Pure and Applied Logic}, pages 125--157.
\newblock Elsevier, 1991.

\bibitem{MN12}
D.~Miller and G.~Nadathur.
\newblock {\em Programming with Higher-order logic}.
\newblock Cambridge University Press, 2012.

\bibitem{NadathurM99}
G.~Nadathur and D.~J. Mitchell.
\newblock Teyjus - {A} compiler and abstract machine based implementation of
  lambda-prolog.
\newblock In {\em CADE'16}, volume 1632 of {\em LNCS}, pages 287--291.
  Springer, 1999.

\bibitem{Nederpelt}
R.~Nederpelt and H.~Geuvers.
\newblock {\em Type Theory and Formal Proof: An Introduction}.
\newblock Cambridge University Press, New York, NY, USA, 1st edition, 2014.

\bibitem{Pierce02}
B.C. Pierce.
\newblock {\em Types and Programming Languages}.
\newblock {MIT} Press, 2002.

\bibitem{Sangiorgi:2011:IBC:2103603}
D.~Sangiorgi.
\newblock {\em Introduction to Bisimulation and Coinduction}.
\newblock Cambridge University Press, 2011.

\bibitem{SimonBMG07}
L.~Simon et~al.
\newblock Co-logic programming: Extending logic programming with coinduction.
\newblock In {\em ICALP'07}, pages 472--483, 2007.

\end{thebibliography}

\pagebreak

\appendix
\section{Omitted Standard Syntactic Definitions}\label{sec:syntax}

In this section, we complete some of the omitted standard syntactic definitions omitted in Section~\ref{sec:backgr}.
The textbook~\cite{MN12} gives the full details on \emph{uniform proofs}.

\paragraph{Free Variables}
 \begin{enumerate}
 \item $FV\left(x\right)= \{x\}$ for all $x\in \mathit{Var}$
 \item $FV\left(c\right)= \emptyset$ for all $c\in \mathit{Cst}$
 \item $FV\left(M\ N\right)= FV\left(M\right)\cup FV\left(M\right)$ for all $M,N\in \Lambda$
 \item $FV\left(\lambda x\, .\,  M\right)= FV\left(M\right)\backslash \{x\}$ for all $M\in \Lambda$ and $x\in\mathit{Var}$
 \item $FV\!\left(\textit{fix}\ M\right)=FV\!\left(M\right)$
 \end{enumerate}

\paragraph{Typing Rules}
 If $\Sigma$ maps constant $c$ to type  $\tau$, we write $c:\tau\in \Sigma$. Moreover we write $\Sigma,c:\tau$ (or $c:\tau,\Sigma$), to denote $\Sigma \cup \{c:\tau\}$. Similar notations apply to $\Gamma$.
\begin{center}
$\infer[{con}]{\Sigma,\Gamma \vdash  c:\tau}{c:\tau\in\Sigma}\qquad\infer[{var}]{\Sigma,\Gamma \vdash x:\tau}{x:\tau \in \Gamma}\qquad\infer[{app}]{\Sigma,\Gamma \vdash  \left(M\ N\right):\tau_2}{\Sigma,\Gamma \vdash  M:\tau_1\rightarrow \tau_2 & \Sigma,\Gamma \vdash  N:\tau_1}$ \\
\vspace{1em}
$\infer[{abs}]{\Sigma,\Gamma \vdash \left(\lambda x\ .\ M\right) : \tau_1\rightarrow \tau_2}{\Sigma,\Gamma,x:\tau_1 \vdash  M :  \tau_2}\qquad\infer[{fp}]{\Sigma,\Gamma \vdash \left(\mathit{fix}\ \lambda x\, .\, M\right) : \tau}{\Sigma,\Gamma,x:\tau \vdash  M :  \tau}$
\end{center}

\paragraph{Definition of  a Sub-term}
\begin{enumerate}
\item  $\textit{Sub}\left(x\right)=\{x\}$ if $x\in \textit{Var}$ or $x\in \textit{Cst}$
\item $\textit{Sub}\left(M\ N\right)=\{M\ N\}\cup\textit{Sub}\left(M\right)\cup\textit{Sub}\left(N\right)$
\item  $\textit{Sub}\left(\lambda x\,.\, M\right)=\{\lambda x\,.\, M\}\cup\textit{Sub}\left(M\right)$
\item  $\mathit{Sub}\left(\mathit{fix}\ M\right)=\{\mathit{fix}\ M\}\cup\mathit{Sub}\left( M\right)$ 
\item in case we use $Z$ to denote an unknown (thus not analysable) term, we define that $\textit{Sub}\left(Z\right)=\{Z\}$
\end{enumerate}

\paragraph{Definition of Substitution}
\begin{enumerate}
\item $x\left[x:=N\right]\equiv N$
\item  $y\left[x:= N\right]\equiv y$, if $y\in\textit{Var}$ is distinct from $x$, or $y\in\textit{Cst}$
\item $\left(M\ L\right) \left[x:=N\right]\equiv \left(M \left[x:=N\right]\right)\ \left(L\left[x:=N\right]\right)$
\item $\left(\textit{fix}\ M\right) \left[x:=N\right]\equiv \textit{fix}\ \left(M \left[x:=N\right]\right)$
\item Let $M^{x\rightarrow y}$ denote the result of replacing every free occurrence of variable $x$ in lambda term $M$ by variable $y$.	$\left(\lambda y\,.\, P\right)\left[x:=N\right]\equiv  \lambda z\,.\,\left(P^{y\rightarrow z}\left[x:=N\right]\right)$, if both $\lambda y\,.\, P=_\alpha \lambda z\,.\,P^{y\rightarrow z}$ and $z\notin FV\left(N\right)$

\end{enumerate}

\paragraph{$\alpha$-Equivalence, Identity}
 The relation $\alpha\textit{-equivalence}$, expressed with symbol $=_\alpha$, is defined as: $\lambda x\, . \, M =_\alpha \lambda y\, . \, M^{x\rightarrow y}$, provided that $y$ does not occur in $M$. Moreover, for arbitrary term $M,N,L$ and variable $z$, we define that $M =_\alpha M$, and that if $M =_\alpha N$, then $N =_\alpha M$, $M\ L =_\alpha N\ L$, $L\ M =_\alpha L\ N$, $\lambda z \, . \, M  =_\alpha \lambda z \, . \, N$ and $\textit{fix}\ M =_\alpha \textit{fix}\ N$, and that if both $M =_\alpha N$ and $N =_\alpha L$, then $M =_\alpha L$. We use $\equiv$ to denote the relation of \emph{syntactical identity modulo $\alpha$-equivalence}.
 
 \paragraph{Syntactic Conventions}
 The outermost parentheses for a term can be omitted. Application associates to the left.
 Application binds more tightly than abstraction, therefore $\lambda x\, .\, M\ N $ stands for $\lambda x\, . \left(M\ N\right)$. The constants $\land, \lor, \supset$ are used as infix operators with precedence decreasing in the same order therein, and they all bind less tightly than application but more tightly than abstraction. For instance,  $\lambda x\ .\ p\ x\supset q\ x $ stands for $\lambda x\ .\left( \left(p\ x\right)\supset \left(q\ x\right)\right) $.  We may combine successive abstraction under one $\lambda$, for instance, $\lambda xy\,.\,M$ instead of $\lambda x . \lambda y\,.\, M$.  Successive quantification with the same quantifier can be combined under a single quantifier, for instance, we write $\exists_\iota xy\ M$ instead of $\exists_\iota x\exists_\iota y \ M$. 
\input{fp}
\section{Tree Terms and Atoms}\label{sec:trees}
 
We write $\omega$ for the set of all  non-negative integers, and write $\omega^*$ for the set of all finite lists of non-negative integers. Lists are denoted by $\left[i,\ldots,j\right]$ where $i,\ldots,j\in\omega$. The empty list is denoted $\epsilon$. If $w,v\in\omega^*$, then $\left[w,v\right]$ denotes the list which is the concatenation of $w$ and $v$. If $w\in\omega^*$ and $i \in \omega$, then $\left[w,i\right]$ denotes the list $\left[w,\left[i\right]\right]$. 

\begin{definition}[Tree Language]\label{df:tl}
A set $L \subseteq \omega^*$ is a \emph{(finitely branching)
	tree language} provided: i) for all $w \in \omega^*$ and all $i,j \in \omega$, if $\left[w,j\right] \in L$ then $w \in L$ and, for all $i<j$, $\left[w,i\right] \in L$; and ii) for all $w \in L$, the set of all $i\in \omega$ such that $\left[w,i\right]\in L$ is finite.  A non-empty tree language always contains $\epsilon$,
which we call its {\em root}. A tree language is {\em finite} if it is a finite subset of $\omega^*$, and {\em infinite} otherwise. We use $|w|$ to denote the length of list $w$, called the \emph{depth} of the node $w\in L$ if $L$ is a tree language.	
\end{definition} 

\paragraph{Arity} Given a first-order signature $\Sigma$, the type of a non-logical constant $c$ in $\Sigma$ can be depicted as $\iota\rightarrow\cdots\rightarrow\iota\rightarrow\tau$ where $\tau$ is either $\iota$ or $o$, and the \emph{arity} of $c$ is defined as the number of occurrences of $\rightarrow$ in the type of $c$. A variable of type $\iota$ has arity $0$. 

\begin{definition}[Tree-Terms, Tree-Atoms]\label{df:tt}
If $L$ is a
non-empty tree language, $\Sigma$ is a first-order signature, and   $\Gamma$ is  a context that assigns the type $\iota$ to variables, then a \emph{tree-term} over $\Sigma$ is a function $t: L \rightarrow \Sigma \cup
\Gamma$ such that, i) for all $w\in L$, $t\left(w\right)$ is either a variable in $\Gamma$, or a  non-logical constant in $\Sigma$ which is not a predicate,  and ii) the arity of  $t\left(w\right)$ equals to the cardinality of the set $\{i\in\Nat \mid \left[w,i\right] \in L\}$. If, for $t(w)$, $\{i\in\Nat \mid \left[w,i\right] \in L\} = \emptyset$, we say $t(w)$ is a \emph{leaf} of the tree-term $t$. A \emph{tree-atom} over $\Sigma$ is defined in the same way as a tree-term except that the root of $L$ is mapped to a predicate in $\Sigma$. We use $s,t$ to denote arbitrary tree-terms (tree-atoms). The domain  of a tree-term (tree-atom) $t$ is denoted $\mathit{Dom}\left(t\right)$.
A tree-term (tree-atom) $t$ is \emph{closed} if no $w\in \mathit{Dom}\left(t\right)$ is mapped to a variable, otherwise $t$ is \emph{open}.
\end{definition}

 Tree-terms (tree-atom) are finite or infinite if their domains are finite or infinite.  \emph{Substitution} of a tree-term  $s$ for all occurrences of a variable $x$ in a tree-term (tree-atom) $t$, denoted $t\left[x:=s\right]$, is defined as follows: let $t'$ be the result of the substitution, then i) $\mathit{Dom}\left(t'\right)$  is the union of $\mathit{Dom}\left(t\right)$ with the set $\{\left[w,v\right]\mid w\in \mathit{Dom}\left(t\right), v\in\mathit{Dom}\left(s\right), t\left(w\right)=x \}$, and ii) 
$t'\left(w\right)=t\left(w\right)$ if $w\in\mathit{Dom}\left(t\right)$ and $t\left(w\right)\neq x$; $t'\left(\left[w,v\right]\right)=s\left(v\right)$ if $w\in\mathit{Dom}\left(t\right), v\in\mathit{Dom}\left(s\right),\text{ and } t\left(w\right)=x$.   

\begin{example}[Tree-term]\label{ex:term-tree}
	Let 
	$\Sigma$ contain non-logical constants $0:\iota$, $\mathit{nil}:\iota$ and  $\mathit{scons}:\iota\rightarrow\iota\rightarrow\iota$. Let $\Gamma_1$ contain $x:\iota$. 
	\begin{enumerate}[leftmargin=0pt]
		\item Let $L = \{ \epsilon, \left[0\right],\left[1\right] \}$. A finite closed tree-term is defined by the 
		mapping $t_1:L\rightarrow\Sigma\cup\Gamma_1$, such that $t_1\left(\epsilon\right) = \mathit{scons}$, $t_1\left(\left[0\right]\right) = 0$, $t_1\left(\left[1\right]\right) = \mathit{nil}$.
		\item Let $W_1::= \epsilon \mid \left[W_1,1\right]$, denoting the set of all finite (possibly empty) lists of 1's. Let $L'$ be the smallest tree language containing $W_1$, which amounts to the union of $W_1$ and $W_2$, where $W_2::= \left[W_1,0\right]$.
		An infinite open tree-term is defined by the 
		mapping $t_2:L'\rightarrow\Sigma\cup\Gamma_1$, such that for all $w\in W_1$, $t_2\left(w\right)=\mathit{scons}$, and for all $v\in W_2$, $t_2\left(v\right)=\mathit{x}$ .
	\end{enumerate}
\end{example}

\begin{definition}[Tree-Term (Tree-Atom) Metric \protect\cite{Llo87}]\label{df:trunc}
Given a term $t$ on $\Sigma$ where  $\star:\iota\notin \Sigma$,	the \emph{truncation} of a tree-term (or tree-atom) $t$ at depth $n\in\omega$, denoted by $\gamma\: '(n,t)$, is constructed as follows:\\ 
	(a) the domain $\mathit{Dom}\left(\gamma\:'(n,t)\right)$ of the term $\gamma\:'(n,t)$ is $\{m \in \mathit{Dom}(t) \mid  |m|\leq n \}$; \\
	(b) 
	\[ \gamma\:'(n,t)\ (m) =
	\begin{cases}
	t(m)       & \quad \text{if } |m| < n\\
	\star   & \quad \text{if } |m| = n\\
	\end{cases}
	\]	
	\noindent	
	For tree-terms (or tree atoms) $t,s$, we define $\gamma(s,t) =
	min\{n \;| \ \gamma\:'(n,s) \neq \gamma\:'(n,t)\}$, so that $\gamma(s,t)$
	is the least depth at which $t$ and $s$ differ.  If we further define
	$d(s,t) = 0$ if $s = t$ and $d(s,t) = 2^{-\gamma(s,t)}$ otherwise,
	then the set of tree-terms and atoms on $\Sigma$ equipped with metric $d$ is an ultrametric space.
\end{definition}

\section{Lattice-Theory and Fixed Point Theorem}\label{sec:lattice}

Let set $S$ be a set equipped with a partial order $\leq$ , denoted $\langle S,\leq \rangle$, and let $S'\subseteq S$, $a\in S$ and $b\in S$.  If for all $x\in S'$, $a\leq x$, then $a$ is called a \emph{lower bound} of $S'$. If for all $x\in S'$, $x\leq b$, then $b$ is called a \emph{upper bound} of $S'$. Further, let $a,b$ be a lower bound and a upper bound of $S'$, respectively, then, if for all lower bounds $a'$ of $S'$, $a'\leq a$, then $a$ is the \emph{greatest lower bound} of $S'$, denoted ${glb}\left(S'\right)$, and, if for all upper bounds $b'$ of $S'$, $b\leq b'$, then $b$ is the \emph{least upper bound} of $S'$, denoted ${lub}\left(S'\right)$.  A partially ordered set $\langle S,\leq \rangle$  is a \emph{complete lattice} if for all subset $S'\subseteq S$, there exist ${glb}\left(S'\right)$ and ${lub}\left(S'\right)$.  
	 
	 \begin{example}\label{exmp: tarski}
	 	Given a set $S$, its power set  is denoted $\mathit{Pow}\left(S\right)$. Then $\langle\mathit{Pow}\left(S\right),\subseteq\rangle$ is a complete lattice, as for each subset $X$ of $\mathit{Pow}\left(S\right)$, ${glb}\left(X\right)$ is given by $\mybigcap \,X$,  and ${lub}\left(X\right)$ is given by $\mybigcup\, X$. 
	 \end{example}

\begin{definition}[Fixed points]\label{def:fp}
Given a complete lattice $\langle L,\leq \rangle$, a function $f: L\mapsto L$  is  \emph{increasing} if $f\left(x\right)\leq f\left(y\right)$ whenever $x\leq y$. Moreover, given an $x\in L$, $x$ is a \emph{fixed point} of $f$ if $x=f\left(x\right)$; $x$ is a \emph{pre-fixed point} of $f$ if $f\left(x\right)\leq x$; $x$ is a \emph{post-fixed point} of $f$ if $x\leq f\left(x\right)$.
We call $a\in L$ the \emph{least fixed point} of $f$, denoted $\mathit{lfp}\left(f\right)$, if $a$ is a fixed point of $f$, and for all fixed points $a'$ of $f$, $a\leq a'$. The \emph{greatest fixed point} of $f$, denoted $\mathit{gfp}\left(f\right)$, is defined similarly. 
\end{definition}

\noindent\textbf{Theorem (Knaster-Tarski). }\textit{Given a complete lattice ${\langle L,\leq \rangle}$ and an increasing function $f: L\mapsto L$ ,}	
\begin{alignat*}{3}
\mathit{lfp}\left(f\right)&=\mathit{glb}\{x\mid x=f\left(x\right)\}&=\mathit{glb}\{x\mid f\left(x\right)\leq x\}\\
\mathit{gfp}\left(f\right)&=\mathit{lub}\{x\mid x=f\left(x\right)\}&=\mathit{lub}\{x\mid x \leq  f\left(x\right)\}
\end{alignat*}

\section{Full Proof of Soundness Theorems  
\ref{thm: main soundess infinite} and \ref{them: cut theorem infi model} }\label{sec:sproof}

In this Section, we give full proofs for the main Lemma and Theorems of Section~\ref{sec:sound}.

\subsection{Corollaries for Lemma~\ref{lem:prod} (Productivity Lemma)} 
Corollary \ref{col: produc} and Corollary \ref{col: guarded atom confluence} follow from Lemma~\ref{lem:prod}.

\begin{corollary}[Productivity of Guarded Full Terms ]\label{col: produc}
	If $M$ (which is not a first order term) on $\Sigma$ is a closed guarded full term, or $M$ is $\mathit{fix}\beta$-equivalent to a closed guarded full term, then there exist an equivalent infinite tree-term $M^T$. Further, if  $M$ is closed, then $M^T\in\mathcal{H}^\Sigma$.  
\end{corollary}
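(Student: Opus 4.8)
\emph{Proof sketch.} The plan is to reduce this statement about terms to the statement about atoms that we already have, Lemma~\ref{lem:prod}, by wrapping $M$ inside a fresh unary predicate and then reading off $M^T$ as the subtree below that predicate. First I would fix a predicate constant $p:\iota\to o$ not occurring in $\Sigma$ (and a constant $\diamond:\iota$ likewise fresh), and work over $\Sigma^+ = \Sigma,\,p:\iota\to o$, which is again a first-order signature. If $M$ is a closed guarded full term, then $p\ M$ is a guarded atom on $\Sigma^+$ directly by the definition of guarded atoms ($p$ being a first-order predicate and $M$ a guarded full term); note $p\ M$ is closed precisely when $M$ is. If instead $M =_{\mathit{fix}\beta} M'$ for some closed guarded full term $M'$, then $p\ M =_{\mathit{fix}\beta} p\ M'$, so $p\ M$ is guarded as well, and I would dispose of this case by handling $M'$ as in the first case and taking $M^T := (M')^T$, which is equivalent to $M$ since $M =_{\mathit{fix}\beta} M'$ and is closed because $M'$ is. So I reduce to the case that $M$ is itself a guarded full term and focus on $p\ M$.

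Next I would exhibit a \emph{fair} infinite sequence of one step $\mathit{fix}\beta$-reductions for $p\ M$. Since $M$ is not first order, it has the shape $F\ M_1\ldots M_n$ with $F = \mathit{fix}\ \lambda x.\lambda y_1\ldots y_n.\, f\ L_1\ldots L_k\ (x\ N_1\ldots N_n)\ L_{k+1}\ldots L_r$ a guarded fixed point term and each $M_j:\iota$ first order. Unfolding $F$ once and $\beta$-normalising, using condition~4 of the definition of guarded fixed point terms ($x\notin\mathit{FV}(L_i)\cup\mathit{FV}(N_j)$), turns $p\ (F\ M_1\ldots M_n)$ into $p\ (f\ L_1'\ldots L_k'\ (F\ N_1'\ldots N_n')\ L_{k+1}'\ldots L_r')$, where $L_i' = L_i[y_1:=M_1]\cdots[y_n:=M_n]$ and $N_j' = N_j[y_1:=M_1]\cdots[y_n:=M_n]$ are again first order of type $\iota$, and $F\ N_1'\ldots N_n'$ is again a guarded full term. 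Iterating, $p\ M$ reduces, one $\mathit{fix}\beta$-step plus one $\beta$-normalisation per round, through atoms of the form $p\ (f\ (\text{first order})\,(\text{guarded full term})\,(\text{first order}))$ that acquire one extra $f$-layer each round; as the $L_i, N_j, M_j$ are first order hence $\mathit{fix}$-free, the only $\mathit{fix}\beta$-reducible subterm at each stage is the single recursive occurrence, so the sequence is fair. Write $(p\ M)_k = p\ M_k$ for the $\beta$-normal form of the $k$-th reduct, so $M_k$ is the $\beta$-normal form of the $k$-th $\mathit{fix}\beta$-reduct of $M$.

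Then I would apply Lemma~\ref{lem:prod} to $p\ M$, obtaining an infinite tree-atom $(p\ M)^T$ with $d((p\ M_k\diamond)^T,(p\ M)^T)\to 0$ as $k\to\infty$, and $(p\ M)^T\in\mathcal{B}^{\Sigma^+}$ whenever $p\ M$ is closed. Since $p$ has arity $1$ and occurs at the root only of each $p\ M_k\diamond$ (it is never duplicated or pushed inward by the reductions, and guarded full terms do not mention $p$), the tree-atom $(p\ M)^T$ has $p$ at its root only, and its domain is $\{\epsilon\}$ together with $[0]$ prefixed to the domain of the subtree at node $[0]$. I would set $M^T$ to be that subtree: $\mathit{Dom}(M^T) = \{w \mid [0,w]\in\mathit{Dom}((p\ M)^T)\}$ and $M^T(w) = (p\ M)^T([0,w])$. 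One then checks that $M^T$ is a tree-term over $\Sigma$ (its root is mapped to the constructor $f$, not to a predicate, and the arity conditions are inherited from $(p\ M)^T$), that it is infinite since $(p\ M)^T$ is, and that restricting the metric convergence above to node $[0]$ yields $d((M_k\diamond)^T,M^T)\to 0$, which is the sense in which $M^T$ is equivalent to $M$, i.e.\ the limit of the finite approximations of the $\mathit{fix}\beta$-unfoldings of $M$. Finally, if $M$ is closed then $p\ M$ is closed, so $(p\ M)^T\in\mathcal{B}^{\Sigma^+}$ is a closed tree-atom, whence $M^T$ is a closed tree-term over $\Sigma$, i.e.\ $M^T\in\mathcal{H}^\Sigma$.

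The only non-routine step is the middle one: reading off from the guardedness conditions that every $\mathit{fix}\beta$-unfolding of $F\ M_1\ldots M_n$ exposes a constructor $f$ at the head while leaving another guarded full term in the recursive position, so that the reduction sequence is genuinely infinite, fair, and has $\beta$-normal forms to which Lemma~\ref{lem:prod} applies. Everything else — the freshness of $p$, the subtree extraction, the reduction of the $\mathit{fix}\beta$-equivalent case to the first case, and the passage from $\mathcal{B}^{\Sigma^+}$ to $\mathcal{H}^\Sigma$ — is bookkeeping.
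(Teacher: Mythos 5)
Your proposal is correct and follows the route the paper intends: the paper offers no proof beyond the remark that the corollary follows from Lemma~\ref{lem:prod}, and your argument makes that precise by wrapping $M$ in a fresh unary predicate so that the atom-level productivity lemma applies, then reading $M^T$ off as the subtree below the root. The only substantive step — using the guardedness conditions to show that each $\textit{fix}\beta$-unfolding of $F\,M_1\ldots M_n$ exposes a constructor and leaves a single guarded full term in the recursive position, yielding a fair infinite reduction sequence — is handled correctly.
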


\begin{corollary}[Tree Sharing Among \protect\textit{fix}$\beta$-Equivalent Guarded Atoms]
\label{col: guarded atom confluence}
If $A_1,\ldots, A_n$ are guarded atoms and they are pairwise $\mathit{fix}\beta$-equivalent, then $A^T_1=\ldots=A_n^T$.
	\end{corollary}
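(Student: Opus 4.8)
\medskip
\noindent\textbf{Proof plan.}
The case $n<2$ is vacuous, and since $=_{\textit{fix}\beta}$ is an equivalence relation the hypothesis just says that $A_1,\dots,A_n$ all lie in one $=_{\textit{fix}\beta}$-class; in particular $A_i=_{\textit{fix}\beta}A_{i+1}$ for each $i$. So it suffices to prove the binary statement — \emph{for guarded atoms $A$ and $B$, $A=_{\textit{fix}\beta}B$ implies $A^{T}=B^{T}$} — and then chain it along $A_1=_{\textit{fix}\beta}A_2=_{\textit{fix}\beta}\cdots=_{\textit{fix}\beta}A_n$. Recall that, by Lemma~\ref{lem:prod} and the remark following it, $A^{T}$ is the $d$-limit of the finite first-order tree-atoms $(A_k\diamond)^{T}$ along \emph{any} fair infinite $\textit{fix}\beta$-reduction sequence out of $A$, where such a sequence proceeds by rounds: a single $\textit{fix}\beta$-step followed by full $\beta$-normalisation. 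There is also a dichotomy: either $A$ and $B$ are first-order atoms, in which case they contain no $\textit{fix}\beta$-redex, so $A=_{\textit{fix}\beta}B$ collapses to $A\equiv B$ and the claim is immediate; or both are genuinely infinite — a guarded atom that is not first-order contains a guarded fixed-point subterm, hence a $\textit{fix}\beta$-redex that regenerates under unfolding, so a fair infinite sequence exists (Corollary~\ref{col: produc}). Assume the latter case from now on.

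\medskip
\noindent\textit{The key fact: $A^{T}$ is invariant under $\textit{fix}\beta$-reduction.} I would show that if $A$ is $\beta$-normal and $C$ is obtained from $A$ by finitely many rounds of the kind above (so $C$ is $\beta$-normal, and is again a guarded atom, since guarded atoms are closed under $=_{\textit{fix}\beta}$), then $A^{T}=C^{T}$. Build a fair infinite sequence out of $A$ that performs these finitely many rounds first and then continues with an arbitrary fair sequence out of $C$; this composite is again fair out of $A$, because every $\textit{fix}\beta$-redex of $A$ is either contracted in the initial finite segment or leaves a residual that persists into $C$ and is contracted by fairness of the continuation — here one uses that the $\textit{fix}\beta$-rule is left-linear and overlap-free, so residuals behave well. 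Applying Lemma~\ref{lem:prod} to the composite sequence, the snapshot-truncations converge to $A^{T}$; applying it to the tail starting at $C$, the same truncations converge to $C^{T}$; and a convergent sequence in the ultrametric space of tree-atoms and any of its tails have the same limit, so $A^{T}=C^{T}$.

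\medskip
\noindent\textit{Conclusion and expected obstacle.} The combined $\beta$/$\textit{fix}\beta$-rewriting is left-linear and critical-pair-free (a $\beta$-redex and a $\textit{fix}\beta$-redex can only be nested, never genuinely overlapping), hence confluent, with $\rightarrow_{\beta}$ and $\rightarrow_{\textit{fix}\beta}$ commuting. From $A=_{\textit{fix}\beta}B$ one thus obtains a common reduct; taking its $\beta$-normal form $C$ and, by a standardisation/commutation rearrangement, realising the reductions of $A$ and of $B$ to $C$ as sequences of the round-by-round form of Lemma~\ref{lem:prod}, the key fact yields $A^{T}=C^{T}=B^{T}$, which is the binary statement. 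The conceptual content is entirely in Lemma~\ref{lem:prod} and uniqueness of ultrametric limits; the real work — and the part I expect to be the main, though routine, hurdle — is the rewriting bookkeeping behind the two appeals to orthogonality: confluence and $\beta$/$\textit{fix}\beta$-commutation together with the standardisation that puts the common-reduct reductions into round-by-round form, and the residual argument guaranteeing that prefixing a finite reduction to a fair sequence preserves fairness.
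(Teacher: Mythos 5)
Your plan is essentially correct, and it is worth noting that the paper itself offers no argument for this corollary at all: it simply declares that it ``follows from Lemma~\ref{lem:prod}''. Your skeleton --- reduce to the binary case via transitivity, get a common reduct from confluence of the orthogonal $\beta$/$\textit{fix}\beta$ system, and show invariance of $(\cdot)^T$ under a finite reduction by prefixing it to a fair sequence and using that a convergent sequence and its tails share a limit --- is the right way to make the claim precise, and you correctly isolate the genuinely technical ingredients (fairness of the composite sequence, standardisation into rounds, residual bookkeeping) that the paper leaves implicit. The one point you should make explicit rather than presuppose is well-definedness of $A^T$: Lemma~\ref{lem:prod} only asserts, for a \emph{given} fair sequence, the existence of a limit, and your tail argument only identifies the limit of one particular composite fair sequence out of $A$ with the limit of one particular fair sequence out of $C$. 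To write $A^T = C^T$ in the sense the paper uses the notation, you need that all fair sequences out of a fixed guarded atom converge to the same tree --- otherwise the corollary's statement (already for $n=1$) is ill-posed. Both the paper and your proposal tacitly assume this; it is presumably part of what the cited result of \cite{KL17} delivers, and your confluence/residual machinery is strong enough to prove it (any two coinitial fair sequences can be compared step by step so that their snapshot truncations eventually agree to any prescribed depth), but as written your argument uses it without establishing it. With that dependence acknowledged, the proposal is a correct and more honest filling-in of the paper's one-line justification.
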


\subsection{Lemma~\protect\ref{lem: coin prf pcpl infi} (Coinductive proof principle)}

\textbf{Lemma~\protect\ref{lem: coin prf pcpl infi} (Coinductive proof principle for coinductive models).} \textit{Let $P_{\Sigma}$ be a logic program, with the operator $\mathcal{T}$ and 
	the model $\mathcal{M}$. A set 
	$S\subseteq\mathcal{M}$, if and only if,  there exist a Herbrand interpretation  $I$ for $P_{\Sigma}$, such that $S\subseteq I$ and $I$ is a post-fixed point of $\mathcal{T}$.  }

\begin{proof}
We first prove the implication from right to left. This is justified by the definition of $\mathcal{M}$.

Then we prove from left to right. $S\subseteq\mathcal{M}$ implies that for all $x\in S$, $x \in \mathcal{M} $. Then, by the definition of $\mathcal{M}$, for all $x\in S$, there exist an $I_x$, such that $x\in I_x$ and $I_x\subseteq\mathcal{T}\left(I_x\right)$. Then let $I=\bigcup\{I_x\mid  x\in S\}$. By the construction of $I$,  we have  $S\subseteq I$. Also we have $I_x\subseteq I$, for all $I_x$. Since $\mathcal{T}$ is increasing, $\mathcal{T}\left(I_x\right)\subseteq\mathcal{T}\left(I\right)$, for all $I_x$. By transitivity of $\subseteq$, $I_x\subseteq\mathcal{T}\left(I\right)$, for all $I_x$. So $I\subseteq\mathcal{T}\left(I\right)$, by the construction of $I$ . 
\end{proof}

\subsection{Proof of Theorem~\protect\ref{thm: main soundess infinite} (Soundness of Coinductive Uniform Proofs)}

\vspace{9pt}
\noindent\textbf{Theorem~\protect\ref{thm: main soundess infinite} (Soundness of \textit{co-hohh} proofs).} \textit{Let
$P_{\Sigma}$ be a logic program with coinductive model $\mathcal{M}$,
${\forall\!_{\iota}x_1\ldots x_m\ A_1\land \ldots \land A_n \supset A}$ be a $H$-formula, and 
$\Sigma;P\looparrowright \forall\!_{\iota}x_1\ldots x_m\ A_1\land \ldots \land A_n \supset A$ have a  proof in \textit{co-hohh} which involves only guarded atoms. 
Then, for an arbitrary tree-form ground instance ${A'_1\land \ldots \land A'_n \supset A'}$ of the goal, $\{A'_1,\ldots A'_n\}\subseteq\mathcal{M}$ implies $\{A'\}\subseteq\mathcal{M}$.}

\begin{proof}
~
\begin{enumerate}[leftmargin=0pt]
	\item \label{enum item: requirements intro}We use Lemma~\ref{lem: coin prf pcpl infi} from right to left, which means, to show that $\{A'\}\subseteq \mathcal{M}$, we look for a set  $I$ such that the \emph{requirements}
	${\{A'\}\subseteq I}$ and ${I \subseteq \mathcal{T}\left(I\right)}$ are satisfied.
\item	\label{enum item: proof structure intro}The proof follows an Analysis--Construction--Verification structure, where we first study the proof of the \emph{root sequent} \[\Sigma;P\looparrowright \forall\!_{\iota}x_1\ldots x_m\ A_1\land \ldots \land A_n \supset A\] which provides a clue for constructing a candidate set $I$. Finally we verify that the set $I$ so constructed satisfies the \emph{requirements}.
\end{enumerate}	
	
	\noindent\textbf{Analysis} 
\begin{enumerate}[resume,leftmargin=0pt]
	\item \label{enum item: analysis co-fix and for all right}The proof for the root sequent starts with the following steps.
	\[\infer[\textsc{co-fix}]
	{\Sigma;P\looparrowright \forall\!_{\iota}x_1\ldots x_m\ A_1\land \ldots \land A_n \supset A}
	{
		\infer[\forall R\langle\rangle ]
		{\Sigma;P,\mathit{ch}\longrightarrow \langle\forall\!_{\iota}x_1\ldots x_m\ A_1\land \ldots \land A_n \supset A\rangle}
		{
			\infer[\forall R\langle\rangle ]
			{\vdots\quad \textit{(Using only $\forall R\langle\rangle$ until) }}
			{\overline{c:\iota\vphantom{d}},\Sigma;P,\mathit{ch}\longrightarrow \langle \left(A_1\land \ldots \land A_n \supset A\right)\overline{\left[x:=c\right]\vphantom{\bar{\left[\right]}}}\rangle}
		} 
	}
	\]
	where $\mathit{ch}$ (the coinductive hypothesis) is
	\[\forall\!_{\iota}x_1\ldots x_m\ A_1\land \ldots \land A_n \supset A\] and $\overline{c:\iota\vphantom{d}}$ is a short hand for $c_1:\iota,\ldots,c_m:\iota$, and  $\overline{\left[x:=c\right]\vphantom{\bar{\left[\right]}}}$ is a short hand for $\left[x_1:=c_1\right]\ldots\left[x_m:=c_m\right]$.
	\item \label{enum item: signature extension and referent defined}The $\forall R\langle\rangle$ steps in \ref{enum item: analysis co-fix and for all right} extend the signature from $\Sigma$ to $\overline{c:\iota\vphantom{d}},\Sigma$.
	Given a formula $K$, we will distinguish its term-form ground instance based on $\Sigma$ (denoted with $\lfloor K\rfloor$) from its term-form ground instance based on ${\overline{c:\iota\vphantom{d}},\Sigma}$ (denoted with $\lfloor K \rfloor^c$).
	Moreover, if a lambda term $L$ on ${\overline{c:\iota\vphantom{d}},\Sigma}$,
	contains some (or none) of the eigenvariables $c_1, \ldots,c_m$,
	we will use notation $L^c$ to refer to it, and particularly, we reserve $B^c$ for an arbitrary such atomic $H$-formula. We define a \emph{referent} of $(B^c)^T$ (provided $B^c$ is guarded),
	as \[(B^c)^T\left[c_{1}:= N_1\right]\ldots\left[c_{m}:= N_m\right]\] where $N_k\in\mathcal{H}^\Sigma$, $1\leq k\leq m$, 
	and, regarding the definition of substitution,  $c_{1}, \ldots,c_{m}$ are technically treated as free variables in $(B^c)^T$. 	
	\item \label{enum item: add hp and decide}
	The proof for the root sequent proceeds as  follows:
	\[
	\infer[\supset\!\! R\langle\rangle]
	{\overline{c:\iota\vphantom{d}},\Sigma;P,\mathit{ch}\longrightarrow \langle \left(A_1\land \ldots \land A_n \supset A\right)\overline{\left[x:=c\right]\vphantom{\bar{\left[\right]}}}\rangle}
	{
		\infer[\textsc{decide}\langle\rangle]
		{
			\overline{c:\iota\vphantom{d}},\Sigma;P,\mathit{ch},\mathit{hp}\longrightarrow \langle  A\overline{\left[x:=c\right]\vphantom{\bar{\left[\right]}}}\rangle
		}
		{
			\overline{c:\iota\vphantom{d}},\Sigma;P,\mathit{ch},\mathit{hp}\stackrel{D^*}{\longrightarrow} \langle  A\overline{\left[x:=c\right]\vphantom{\bar{\left[\right]}}}\rangle
		}
	}
	\]
	where $\mathit{hp}$ (the hypothesis) is $\left(A_1\land \ldots \land A_n\right) \overline{\left[x:=c\right]\vphantom{\bar{\left[\right]}}}$ and ${D^*\in P}$. Note that the above $\textsc{decide}\langle\rangle$ step is also the only occasion in the proof where the $\textsc{decide}\langle\rangle$ rule is used (and this shall become clearer as we further analyse the sequent proof).
	We will use $\forall\!_\iota \bar{x}\ A_{\alpha}$ as the general form of a \emph{fact} in $P$,  ${\forall\!_\iota \bar{x}\;A_{\alpha_1}\land\ldots\land A_{\alpha_u}\supset A_{\alpha_0}}$ as the general form of a \emph{rule} in $P$, and write ${A'_{\alpha_1}\land\ldots\land A'_{\alpha_u}\supset A'_{\alpha_0}}$ for ${\lfloor\forall\!_\iota \bar{x}\;A_{\alpha_1}\land\ldots\land A_{\alpha_u}\supset A_{\alpha_0}\rfloor^c}$.
	$D^*$ can be either a fact or a rule.
	\item \label{enum item: D star is a fact} 	If $D^*$ is a fact, then the proof for the root sequent proceeds as  follows:
	\[
	\infer[\forall L\langle\rangle]
	{\overline{c:\iota\vphantom{d}},\Sigma;P,\mathit{ch},\mathit{hp}\stackrel{\forall\!_\iota \bar{x}\ A_{\alpha} }{\longrightarrow} \langle  A\overline{\left[x:=c\right]\vphantom{\bar{\left[\right]}}}\rangle}
	{  \infer[\forall L\langle\rangle]
		{\vdots\quad \textit{(Using only $\forall L\langle\rangle$ until) }}
		{
			\infer[\textsc{initial}\langle\rangle]
			{
				\overline{c:\iota\vphantom{d}},\Sigma;P,\mathit{ch},\mathit{hp}\stackrel{\lfloor\forall\!_\iota \bar{x}\ A_{\alpha} \rfloor^c}{\longrightarrow}\langle  A\overline{\left[x:=c\right]\vphantom{\bar{\left[\right]}}}\rangle
			}{} 
		}    
	} 
	\] 
	where $\lfloor\forall\!_\iota \bar{x}\ A_{\alpha} \rfloor^c =_{\mathit{fix}\beta} A\overline{\left[x:=c\right]\vphantom{\bar{\left[\right]}}}$. This situation is trivial and we can do the post-fixed point construction and verification immediately, as follows. Note that $A'$ (which the head of an arbitrary tree-form ground instance of the goal mentioned in the theorem) is a referent of $\left(A\overline{\left[x:=c\right]\vphantom{\bar{\left[\right]}}}\right)^T$. Since $\lfloor\forall\!_\iota \bar{x}\ A_{\alpha} \rfloor^c =_{\mathit{fix}\beta} A\overline{\left[x:=c\right]\vphantom{\bar{\left[\right]}}}$, then by Corollary~\ref{col: guarded atom confluence},  $\left(A\overline{\left[x:=c\right]\vphantom{\bar{\left[\right]}}}\right)^T= \left(\lfloor\forall\!_\iota \bar{x}\ A_{\alpha} \rfloor^c\right)^T$, so $A'$ is a referent of $\left(\lfloor\forall\!_\iota \bar{x}\ A_{\alpha} \rfloor^c\right)^T$. A referent of $\left(\lfloor\forall\!_\iota \bar{x}\ A_{\alpha} \rfloor^c\right)^T$, i.e. $A'$, is a tree-form ground instance of $\forall\!_\iota \bar{x}\ A_{\alpha}$. Then, let $I=\{A'\}$, we have  ${I \subseteq \mathcal{T}\left(I\right)}$. Now we have verified that $I$ satisfies the requirements (cf. \ref{enum item: requirements intro}).
\item \label{enum item: D star is a rule}		If $D^*$ is a rule, then the proof for the root sequent proceeds as  follows:
\[
\infer[\forall L\langle\rangle]
{\overline{c:\iota\vphantom{d}},\Sigma;P,\mathit{ch},\mathit{hp}\stackrel{\forall\!_\iota \bar{x}\;A_{\alpha_1}\land\ldots\land A_{\alpha_u}\supset A_{\alpha_0}}{\longrightarrow} \langle  A\overline{\left[x:=c\right]\vphantom{\bar{\left[\right]}}}\rangle}
{  \infer[\forall L\langle\rangle]
	{\vdots\quad \textit{(Using only $\forall L\langle\rangle$ until) }}
	{
		\infer[\supset\! L\langle\rangle]
		{
			\overline{c:\iota\vphantom{d}},\Sigma;P,\mathit{ch},\mathit{hp}\stackrel{\lfloor\forall\!_\iota \bar{x}\;A_{\alpha_1}\land\ldots\land A_{\alpha_u}\supset A_{\alpha_0}\rfloor^c}{\longrightarrow}\langle  A\overline{\left[x:=c\right]\vphantom{\bar{\left[\right]}}}\rangle 
		}
		{\mathit{SubProof\ 1 } & \mathit{SubProof\ 2 }} 
	}   
} 
\]
where $\mathit{SubProof\ 1 }$ is 
\[
\infer[\textsc{initial}]
{
	\overline{c:\iota\vphantom{d}},\Sigma;P,\mathit{ch},\mathit{hp}\stackrel{A'_{\alpha_0}}{\longrightarrow}  A\overline{\left[x:=c\right]\vphantom{\bar{\left[\right]}}}
}{}
\]
and the  root of $\mathit{SubProof\ 2 }$ is labelled by
\[\overline{c:\iota\vphantom{d}},\Sigma;P,\mathit{ch},\mathit{hp}{\longrightarrow} A'_{\alpha_1}\land\ldots\land A'_{\alpha_u}\] 

\item \label{enum item: remark on impli right angle step} The $\supset\! R\langle\rangle$ step in \ref{enum item: D star is a rule} removes the coinductive status of goals in its upper sequents, then $\mathit{SubProof\ 2 }$ is a uniform proof in the style of Miller and Nadathur \cite{MNPS91}. It is important to know that $\mathit{SubProof\ 2 }$ has the following properties.

\item\label{enum item: hh three forms of sequents in proof}  We pay attention to two kinds of sequents involved in $\mathit{SubProof\ 2 }$:
		\begin{enumerate}[leftmargin=*]
			\item\label{enum item: hh Ac1 to Acp}  Sequents, whose right side is a conjunction of (at least two) atoms,  in the form
			\begin{flushleft}
				$\overline{c:\iota\vphantom{d}},\Sigma;P,\mathit{ch},\mathit{hp}\longrightarrow B^c_{1}\land\ldots \land B^c_{p}$ 
			\end{flushleft} 
			For instance, the sequent labelling the root of $\mathit{SubProof\ 2 }$ is in this form.
			\item \label{enum item: hh Ac} Sequents, whose right side is a single atom, in the form
			\begin{flushleft}
				$\overline{c:\iota\vphantom{d}},\Sigma;P,\mathit{ch},\mathit{hp}\longrightarrow B^c$ 
			\end{flushleft}
			
		\end{enumerate}
		\item \label{enum item: hh process Ac1 Acp}In $\mathit{SubProof\ 2 }$, for each sequent in the form (\ref{enum item: hh Ac1 to Acp}), a succession of $\land R$ steps are used to break it down to a series of sequents \[\overline{c:\iota\vphantom{d}},\Sigma;P,\mathit{ch},\mathit{hp}\longrightarrow B^c_{k}\quad 1\leq k\leq p\] which are in the form (\ref{enum item: hh Ac}).
		
		\item\label{enum item: hh process Ac} In $\mathit{SubProof\ 2 }$, for each sequent in the form (\ref{enum item: hh Ac}), only the \textsc{decide} rule can be used as the next step of reduction, and there are four cases of possible development, as follows.
		
		\begin{enumerate}
			\item \label{enum item: hh quan imp decide axiom}\textsc{decide} selects a fact from $P$.
			\[
			\infer[\textsc{decide}]
			{\overline{c:\iota\vphantom{d}},\Sigma;P,\mathit{ch},\mathit{hp}{\longrightarrow}  B^c}
			{
				\infer[\forall L\langle\rangle]
				{\overline{c:\iota\vphantom{d}},\Sigma;P,\mathit{ch},\mathit{hp}\stackrel{\forall\!_\iota \bar{x}\ A_{\alpha}}{\longrightarrow}  B^c}
				{  \infer[\forall L\langle\rangle]
					{\vdots\quad \textit{(Using only $\forall L\langle\rangle$ until) }}
					{
						\infer[\textsc{initial}]
						{
							\overline{c:\iota\vphantom{d}},\Sigma;P,\mathit{ch},\mathit{hp}\stackrel{\lfloor\forall\!_\iota \bar{x}\ A_{\alpha} \rfloor^c}{\longrightarrow} B^c
						}{} 
					}    
				} 
			}
			\] 
			where $B^c =_{\mathit{fix}\beta} {\lfloor\forall\!_\iota \bar{x}\ A_{\alpha} \rfloor^c}$.

			\item \label{enum item: hh quan imp decide non-axiom} \textsc{decide} selects a rule from $P$.
			\[
			\infer[\textsc{decide}]
			{\overline{c:\iota\vphantom{d}},\Sigma;P,\mathit{ch},\mathit{hp}{\longrightarrow}  B^c}
			{
				\infer[\forall L\langle\rangle]
				{\overline{c:\iota\vphantom{d}},\Sigma;P,\mathit{ch},\mathit{hp}\stackrel{\forall\!_\iota \bar{x}\;A_{\alpha_1}\land\ldots\land A_{\alpha_u}\supset A_{\alpha_0}}{\longrightarrow}  B^c}
				{  \infer[\forall L\langle\rangle]
					{\vdots\quad \textit{(Using only $\forall L\langle\rangle$ until) }}
					{
						\infer[\supset\! L\langle\rangle]
						{
							\overline{c:\iota\vphantom{d}},\Sigma;P,\mathit{ch},\mathit{hp}\stackrel{\lfloor\forall\!_\iota \bar{x}\;A_{\alpha_1}\land\ldots\land A_{\alpha_u}\supset A_{\alpha_0}\rfloor^c}{\longrightarrow}  B^c 
						}
						{\mathit{SubProof\ 1' } & \mathit{SubProof\ 2' }} 
					}   
				}
			} 
			\]
			where $\mathit{SubProof\ 1' }$ is 
			\[
			\infer[\textsc{initial}]
			{
				\overline{c:\iota\vphantom{d}},\Sigma;P,\mathit{ch},\mathit{hp}\stackrel{A'_{\alpha_0}}{\longrightarrow} B^c
			}{}
			\]
			and the  root of $\mathit{SubProof\ 2' }$ is labelled by
			\[\overline{c:\iota\vphantom{d}},\Sigma;P,\mathit{ch},\mathit{hp}{\longrightarrow} A'_{\alpha_1}\land\ldots\land A'_{\alpha_u}\] 
			which is a sequent of the form (\ref{enum item: hh Ac1 to Acp}). Note that the selected rule is possibly different from the rule selected in \ref{enum item: D star is a rule} although the same formula is used to depict the selected rule. 
			
			\item \label{enum item: hh quan impl decide CH}\textsc{decide} selects the coinductive hypothesis $\mathit{ch}$.
			\[\infer[\textsc{decide}]
			{\overline{c:\iota\vphantom{d}},\Sigma;P,\mathit{ch},\mathit{hp}\longrightarrow B^c}
			{
				\infer[\forall L]
				{\overline{c:\iota\vphantom{d}},\Sigma;P,\mathit{ch},\mathit{hp}\stackrel{\forall\!_\iota x_1\ldots x_m\ A_1\land \ldots \land A_n\supset A}{\longrightarrow} B^c}
				{
					\infer[\forall L]
					{\vdots\quad \textit{(Using only $\forall L$ until) }}
					{
						\infer[\supset\!\! L]
						{
							\overline{c:\iota\vphantom{d}},\Sigma;P,\mathit{ch},\mathit{hp}\stackrel{\lfloor\forall\!_\iota x_1\ldots x_m\ A_1\land \ldots \land A_n\supset A\rfloor^c}{\longrightarrow} B^c
						}
						{\mathit{SubProof\ 1'' } & \mathit{SubProof\ 2'' }}
					}
				}
			}\]
			where $\lfloor\forall\!_\iota x_1\ldots x_m\ A_1\land \ldots \land A_n\supset A\rfloor^c$ refers to \[\left(A_1\land \ldots \land A_n\supset A\right)\left[x_{1}:= L^c_1\right]\ldots\left[x_{m}:= L^c_m\right]\] for some guarded full terms $L^c_1\ldots L^c_m\in\mathcal{U}^\Sigma_2$--- These terms are guarded full terms because the theorem assumes that all atoms involved in the sequent proof are guarded, both before substitution and after substitution. Then a case analysis on a variable being substituted, reveals that the substituting terms must be guarded full terms. 
			 $\mathit{SubProof\ 1'' }$ is 
			\[
			\infer[\textsc{initial}]{\overline{c:\iota\vphantom{d}},\Sigma;P,\mathit{ch},\mathit{hp}\stackrel{A\left[x_{1}:= L^c_1\right]\ldots\left[x_{m}:= L^c_m\right]}{\longrightarrow} B^c}{}
			\]
			and the root of $\mathit{SubProof\ 2'' }$ is labelled by
			\[{\overline{c:\iota\vphantom{d}},\Sigma;P,\mathit{ch},\mathit{hp}{\longrightarrow} \left(A_1\land \ldots \land A_n\right)\left[x_{1}:= L^c_1\right]\ldots\left[x_{m}:= L^c_m\right]}\]
			which is a sequent of the form (\ref{enum item: hh Ac1 to Acp}). Let $\delta$ denote the substitution  \[\left[x_{1}:= L^c_1\right]\ldots\left[x_{m}:= L^c_m\right]\] Later we will see that the above mentioned $\delta$ plays an central role in the construction of the post-fixed point.

			\item \label{enum item: hh quan impl decide HP}\textsc{decide} selects the hypothesis $\mathit{hp}$. 
			\[\infer[\textsc{decide}]{\Sigma;P,\mathit{ch},\mathit{hp}\longrightarrow B^c}{
				\infer[\land L]
				{\overline{c:\iota\vphantom{d}},\Sigma;P,\mathit{ch},\mathit{hp}\stackrel{\left(A_1\land \ldots \land A_n\right)\overline{\left[x:=c\right]\vphantom{\bar{\left[\right]}}}}{\longrightarrow} B^c}
				{
					\infer[\land L]
					{\vdots\quad \textit{(Using only $\land L$ until) }}
					{\infer[\textsc{initial}]{\overline{c:\iota\vphantom{d}},\Sigma;P,\mathit{ch},\mathit{hp}\stackrel{A_k\overline{\left[x:=c\right]\vphantom{\bar{\left[\right]}}}}{\longrightarrow} B^c}{}}
				}
			}\]
			where $B^c =_{\mathit{fix}\beta} A_k\overline{\left[x:=c\right]\vphantom{\bar{\left[\right]}}}$ for some $k$ and ${1\leq k\leq n}$.
		\end{enumerate}  		
	\end{enumerate}
	
	\noindent\textbf{Construction}
\begin{enumerate}[leftmargin=0pt, resume]
	\item \label{enum item: construction intro}
 We now use the above analysis of the proof of the root sequent to construct the post-fixed point $I$.
 It remains to work with the case when the program clause $D^*$ selected by $\textsc{decide}\langle\rangle$ is a rule (cf. \ref{enum item: add hp and decide}, \ref{enum item: D star is a fact}, \ref{enum item: D star is a rule}).
 The main difficulty in construction in this situation arises when the post-fixed point in question is an infinite set, which 
 happens when a proof exhibits certain irregularity.
 
 \item \label{enum item: provide concrete terms for instan the goal} Firstly, let the formula ${A'_1\land \ldots \land A'_n \supset A'}$ (as mentioned in the theorem) be computed by \[\left(A^T_1\land \ldots \land A^T_n\supset A^T\right)\left[x_{1}:= \mathcal{N}_1\right]\ldots\left[x_{m}:= \mathcal{N}_m\right]\] for some arbitrary  $\mathcal{N}_1,\ldots, \mathcal{N}_m\in \mathcal{H}^\Sigma$. 
 
 \item \label{enum item: assume ch is used for s times}Note that the rules of our coinductive calculus allow us to form only one coinductive hypothesis in a proof for the root sequent, but the proof may use this coinductive hypothesis more than once. 
 Suppose that the coinductive hypothesis $\mathit{ch}$ is used in the given proof $s\in\omega$ times. 
 Let $\psi=\{1,\ldots,s\}$.
 
 \item \label{enum item: substi deltas denoted}Consider the $s$ times when the coinductive hypothesis was used. In each case, a substitution $\delta$ was constructed, as described above in \ref{enum item: hh quan impl decide CH}.
 Let us denote all these substitutions by $\delta_1,\ldots,\delta_s$.
 
 \item Each $\delta_j$ ($1\leq j\leq s$) is given by \[\left[x_{1}:= \mathcal{L}^c_{\left(j,\,1\right)}\right]\ldots\left[x_{m}:= \mathcal{L}^c_{\left(j,\,m\right)}\right]\]	
 where $\mathcal{L}^c_{\left(j,\,i\right)}$ is given by $L^c_{i}$ in $\delta_j$, as in  \ref{enum item: hh quan impl decide CH}.
 (Note that each time the same coinductive hypothesis is used.)
 
 \item Given that the proof of the root sequent uses exactly the eigenvariables $c_1, \ldots c_m$,  we want to define a set of mapping for the eigenvariables into $\mathcal{H}^\Sigma$ in order to construct the post-fixed point.  So, we define $\psi^*$ to denote the set of all finite lists on $\psi$, including the empty list $\epsilon$, and define
 the set $ \{\Theta\left( w \right) \mid w \in \psi^*\}$ of substitutions, where 
 $\Theta$ is a function defined by recursion, as follows. 
 \begin{enumerate}
 	\item Base case:\[\Theta\left(\epsilon\right)  =\left[c_{1}:= \mathcal{N}_1\right]\ldots\left[c_{m}:= \mathcal{N}_m\right]\]
 	\item Recursive case: If  $w\in\psi^*,\ j\in\psi$, then\[\Theta\left(\left[w,j\right]\right)  =\left[c_{1}:= \left( \left(\mathcal{L}^c_{\left(j,1\right)}\right)^T\Theta\left(w\right) \right)\right]\ldots\left[c_{m}:= \left( \left(\mathcal{L}^c_{\left(j,m\right)}\right)^T\Theta\left(w\right) \right)\right]\]
 	where the notation $\left(\mathcal{L}^c_{\left(j,m\right)}\right)^T\Theta\left(w\right)$ is used to denote application of the substitution $\Theta\left(w\right)$ to the tree-term $\left(\mathcal{L}^c_{\left(j,m\right)}\right)^T$. Note that  $\left(\mathcal{L}^c_{\left(j,m\right)}\right)^T$ is a valid notation due to Corollary~\ref{col: produc} and the fact that $\mathcal{L}^c_{\left(j,m\right)}$ is a guarded full term (cf. \ref{enum item: hh quan impl decide CH}).
 \end{enumerate}

 Since $\psi^*$ is an infinite set, this construction is potentially infinite, and moreover, through recursive application of substitutions at each iteration of $\Theta$ it can construct an infinite set of substitutions. 
 
 \item Note that the above construction involving $\Theta$ can and will be used to construct models generally, even when the coinductive hypothesis was not applied (i.e applied $s=0$ times). 
 
 \item We define the set $I^c$, as the collection of all atoms $\left(E^c\right)^T$, where $E^c$ is the atom on the right side of some sequent $Q$, where $Q$ is in $\mathit{SubProof\ 2 }$.
 In addition, we add $\left(A\overline{\left[x:=c\right]\vphantom{\bar{\left[\right]}}}\right)^T$ to $I^c$, i.e. we add the tree corresponding to the atom occurring in the lower sequent in the $\textsc{decide}\langle\rangle$ step (cf. \ref{enum item: add hp and decide}).
 Note that the  $\langle A\overline{\left[x:=c\right]\vphantom{\bar{\left[\right]}}} \rangle$ in that sequent should intuitively be seen as a coinductive conclusion.
 
 \item \label{enum item: define IC} We use $I^c\Theta\left(w\right)$ to denote the set resulted from applying substitution $\Theta\left(w\right)$ to all members of $I^c$. We define $I_1$ as \[I_1={\bigcup_{w\in\psi^*}} \; I^c\Theta\left(w\right)\] 
 
 \item Given $\{ A'_1,\ldots A'_n\}\subseteq \mathcal{M}$, by Lemma~\ref{lem: coin prf pcpl infi} (used from left to right), there exist a set $I_2$,  such that ${\{A'_1,\ldots A'_n\}\subseteq I_2}$ and $I_{2}$ is a post-fixed point of $\mathcal{T}$. 
 
 \item Now we construct a candidate post-fixed point $I$, given as \[I=I_1 \cup I_2\] 
\end{enumerate}	
	 	
	\noindent\textbf{Verification} 
\begin{enumerate}[resume,leftmargin=0pt]
	\item We verify that set $I$ satisfies the two requirements (cf. \ref{enum item: requirements intro}). 
	
	\item \label{enum item: requi one satisfied} $\{A'\}\subseteq I$ is proved by
	\begin{align*}
	&  \{A'\}\\
	= &\  \left\lbrace \left(A\overline{\left[x:=c\right]\vphantom{\bar{\left[\right]}}}\right)^T\Theta\left(\epsilon\right)\right\rbrace \\
	\subseteq &\ I^c\Theta\left(\epsilon\right)\subseteq I_1 \subseteq I
	\end{align*}
	
	\item  It remains to show that $I$ is a post-fixed point of $\mathcal{T}$. In other words, we need to show that each member of $I$ is also a member of $\mathcal{T}\left(I\right)$. 
	
	\item For all $y\in I_2$, $y\in \mathcal{T}\left(I\right)$, as proven by the  following argument:
	
	\begin{align*}
	I_2\subseteq I&\quad\text{by def. of }I\\
	\mathcal{T}\left(I_2\right)\subseteq\mathcal{T}\left(I\right)&\quad\mathcal{T}\text{ is increasing}\\
	I_2\subseteq\mathcal{T}\left(I_2\right)&\quad\text{by def. of }I_2\\
	I_2\subseteq\mathcal{T}\left(I\right)&\quad\text{by transitivity of }\subseteq
	\end{align*}
	
	\item By definition of $I_1$, for all  $y\in I_1$,   $y\in I^c\Theta\left(w\right)$ for some $w\in\psi^*$, and $y$ has an underlying atom $\left(F^c\right)^T\in I^c$ such that  \[y=\left(F^c\right)^T\Theta\left(w\right)\] Let $R$ be the sequent from where we get the atom $F^c$. We show $y\in \mathcal{T}\left(I\right)$ using a case analysis \ref{enum item: veri casse decide fact} -- \ref{enum item: veri casse decide hp},  with respect to the ways \ref{enum item: hh quan imp decide axiom} -- \ref{enum item: hh quan impl decide HP} in which the sequent $R$ is reduced.
	\begin{enumerate}
		\item  \label{enum item: veri casse decide fact} If $R$ is reduced in the way \ref{enum item: hh quan imp decide axiom}, then there exist a fact $\forall\!_\iota \bar{x}\ A_{\alpha}$ in $P$, such that $F^c =_{\mathit{fix}\beta} \lfloor\forall\!_\iota \bar{x}\ A_{\alpha}\rfloor^c$, then $\left(F^c\right)^T = \left(\lfloor\forall\!_\iota \bar{x}\ A_{\alpha}\rfloor^c\right)^T$ (by Corollary~\ref{col: guarded atom confluence}), so there exist tree-form ground instance $\lfloor\forall\!_\iota \bar{x}\ A_{\alpha}\rfloor^T$ such that  $\left(F^c\right)^T\Theta\left(w\right)=\lfloor\forall\!_\iota \bar{x}\ A_{\alpha}\rfloor^T$. So $y\in \mathcal{T}\left(I\right)$.
		
		\item \label{enum item: veri casse decide rule}If $R$ is reduced in the way \ref{enum item: hh quan imp decide non-axiom} (note that this is also the case when $F^c\equiv A{\overline{\left[x:=c\right]\vphantom{\bar{\left[\right]}}} }$), then there exist a rule $K\in P$ such that  $\left(\mathit{head}\ \lfloor K\rfloor^c\right)=_{\mathit{fix}\beta}F^c$, and for each $x\in\left(\mathit{body}\ \lfloor K\rfloor^c\right)$, $x^T\in I^c$. Note that  $\left(\mathit{head}\ \lfloor K\rfloor^c\right)^T= \left(F^c\right)^T$ by Corollary~\ref{col: guarded atom confluence}. Then there exist a tree-form ground instance $\lfloor K\rfloor^T$ for $K$, whose head is $\left(F^c\right)^T\Theta\left(w\right)$, and whose body $S$ is the set such that $x'\in S$ if and only if there exist $x\in\left(\mathit{body}\ \lfloor K\rfloor^c\right)$ and $x'=x^T\,\Theta\left(w\right)$. Then, for all $x'\in S$, $x'\in I^c\Theta\left(w\right)$. So  $y\in\mathcal{T}\left(I\right)$.
		
		\item \label{enum item: veri casse decide ch}If $R$ is reduced in the way \ref{enum item: hh quan impl decide CH}, then the use of $ch$ results in existence of $\delta_r$ for some $r\in\psi$, given as \[\left[x_{1}:= \mathcal{L}^c_{\left(r,1\right)}\right]\ldots\left[x_{m}:= \mathcal{L}^c_{\left(r,m\right)}\right]\]and $F^c =_{\mathit{fix}\beta} A\delta_r$. Note that 
		\[\left(A\,\delta_r\right)^T\Theta\left(w\right)\ =\  \left(A\overline{\left[x:=c\right]\vphantom{\bar{\left[\right]}}}\right)^T\Theta\left(\left[w,r\right]\right)\] 
		because, by definition of $\Theta$, 
		\[\Theta\left(\left[w,r\right]\right)= \left[c_{1}:=\left( \left(\mathcal{L}^c_{\left(r,1\right)}\right)^T\Theta\left(w\right)\right)\right]\ldots\left[c_{m}:= \left(\left(\mathcal{L}^c_{\left(r,m\right)}\right)^T\Theta\left(w\right)\right)\right]\]
		So $y\in I^c\Theta\left(\left[w,r\right]\right)$ and this membership is underlain by $\left(A\overline{\left[x:=c\right]\vphantom{\bar{\left[\right]}}}\right)^T\in I^c$. In light of this discovery, we start a new round of case analysis where we regard the same $y$ as a member of $I^c\Theta\left(\left[w,r\right]\right)$, then \ref{enum item: veri casse decide rule} applies, so $y\in \mathcal{T}\left(I\right)$.
		
		\item \label{enum item: veri casse decide hp} If $R$ is reduced in the way \ref{enum item: hh quan impl decide HP}, it implies that $F^c=_{\mathit{fix}\beta}A_k\overline{\left[x:=c\right]\vphantom{\bar{\left[\right]}}}$ for some $k$. Further, by Corollary~\ref{col: guarded atom confluence}, $\left(F^c\right)^T = \left(A_k\overline{\left[x:=c\right]\vphantom{\bar{\left[\right]}}}\right)^T$. Then $\left(F^c\right)^T \Theta\left(w\right) = \left(A_k\overline{\left[x:=c\right]\vphantom{\bar{\left[\right]}}}\right)^T\Theta\left(w\right)$.  We further inspect two sub-cases.
		\begin{enumerate}
			\item \label{enum item: veri casse decide hp sub case empty} $w=\epsilon$. Then  $y=A'_k$, and $A'_k\in I_2$, so $y\in\mathcal{T}\left(I\right)$.
			\item \label{enum item: veri casse decide hp sub case nonempty}
			
			$w\neq\epsilon$. Then there exist $v\in\psi^*, i\in\psi$ such that $\left[v,i\right]=w$. Then,  there exist a use of $\mathit{ch}$ that gives rise to $\delta_i$, given as
			\[\left[x_{1}:= \mathcal{L}^c_{\left(i,1\right)}\right]\ldots\left[x_{m}:= \mathcal{L}^c_{\left(i,m\right)}\right]\]
			In addition, there exist ${\left(A_k\delta_i\right)^T\in I^c}$ (cf. \ref{enum item: hh quan impl decide CH} and \ref{enum item: define IC}). Note that 
			\[\left(A_k\,\delta_i\right)^T\Theta\left(v\right)\ =\  \left(A_k\overline{\left[x:=c\right]\vphantom{\bar{\left[\right]}}}\right)^T\Theta\left(w\right)\]
			This is because , by definition of $\Theta$,  that
			\begin{align*}
			&\Theta\left(w\right)=\Theta\left(\left[v,i\right]\right)\\
			=\ & \left[c_{1}:= \left(\left(\mathcal{L}^c_{\left(i,1\right)}\right)^T\Theta\left(v\right)\right)\right]\ldots\left[c_{m}:= \left(\left(\mathcal{L}^c_{\left(i,m\right)}\right)^T\Theta\left(v\right)\right)\right]
			\end{align*}
			Thus, $y\in I^c\Theta\left(v\right)$, underlain by $\left(A_k\delta_i\right)^T\in I^c$. In light of this discovery, we  start a new round of case analysis where we regard the same $y$ as a member of $I^c\Theta\left(v\right)$. Note that whenever this branch, i.e. \ref{enum item: veri casse decide hp sub case nonempty}, is reached, we take a new round of case analysis, thus entering into iteration if we repeatedly reach this branch. Since in each new round, we work with a smaller index, i.e. from $w=\left[v,i\right]$ to $v$,  iteration caused by this branch always terminates, ending in one of the branches \ref{enum item: veri casse decide fact}, \ref{enum item: veri casse decide rule}, \ref{enum item: veri casse decide ch} or \ref{enum item: veri casse decide hp sub case empty} in the final round. In all these possible final circumstances, previous analysis has shown that  $y\in \mathcal{T}\left(I\right)$.
		\end{enumerate}		    
\end{enumerate}
\item Now we have finished with showing that $I\subseteq \mathcal{T}\left(I\right)$, and the proof is complete.	
\end{enumerate}	
\end{proof}

\subsection{Proof of Theorem \protect\ref{them: cut theorem infi model}}

\noindent\textbf{Theorem \protect\ref{them: cut theorem infi model} (Conservative Model Extension). }
\textit{	Let a
	logic program $P_\Sigma$ have coinductive model $\mathcal{M}$, 
	sequent $\Sigma;P\looparrowright H$ have a \textit{co-hohh} proof that only involves guarded atoms.
	Let $H_1,\ldots, H_n$ be distinct term-form ground instances of $H$ involving only guarded atoms, and such that for each $H_k$ ($1\leq k\leq n$ ),   if  
	$A\in\mathit{body}\, H_k$, then  $A^T\in\mathcal{M}$. Let $P\cup\{H_1,\ldots, H_n\}$ have a coinductive model $\mathcal{M}'$.
	Then, $\mathcal{M}=\mathcal{M}'$.}

\begin{proof}
	We first show $\mathcal{M}\subseteq\mathcal{M}'$.
	Let $\mathcal{T}_{P}$ and $\mathcal{T}^{+}_{P}$ denote immediate consequence operators for $P$ and $P\cup\{H_1,\ldots, H_n\}$, respectively. By definition of a coinductive model, we have \[\mathcal{M}=\bigcup\{I\mid I\subseteq\mathcal{T}_P\left(I\right)\}\qquad\mathcal{M}'=\bigcup\{I\mid I\subseteq\mathcal{T}^{+}_{P}\left(I\right)\}\]	
	Note that \[\{I\mid I\subseteq\mathcal{T}_P\left(I\right)\}\subseteq \{I\mid I\subseteq\mathcal{T}^{+}_{P}\left(I\right)\}\]therefore $\mathcal{M}\subseteq\mathcal{M}'$.
	
	Then we show $\mathcal{M}'\subseteq\mathcal{M}$.
	Given a set $I$ that is a post-fixed point of $\mathcal{T}^{+}_{P}$, we distinguish two cases concerning how $I$ is a post-fixed point, as follows. 
	\begin{enumerate}[leftmargin=0pt]
		\item For all $x\in I$, there exist $F\in P$, with tree-form ground instance $F'$, such that $x$ is the head of $F'$, and the body of $F'$ is a subset of $I$. In this case, $I$ is also a post-fixed point of $\mathcal{T}_{P}$.
		\item There exist $x_1,\ldots,x_m\in I$ ($1 \leq m\leq n$, where the inequality $m\leq n$ is explained later), called \emph{outstanding members} of $I$, such that for all $x_i$ ($1\leq i\leq m$)
		\begin{enumerate}
			\item $x_i$ is the tree-atom equivalent to the head of some $H_k$, and, the set of tree-atoms equivalent to the body of $H_k$ is a subset of $I$, and
			\item there does \emph{not} exist $F\in P$, with tree-form ground instance $F'$ such that $x_i$ is the head of $F'$, and the body of $F'$ is a subset of $I$.
		\end{enumerate} Note that the total number of outstanding members in a given post-fixed point of $\mathcal{T}^{+}_{P}$ must be finite, and more precisely, no more than $n$, which is the number of $H$'s instances used to augment the program, to which  outstanding members are associated.
		Because of existence of outstanding members, $I$ is \emph{not} a post-fixed point of $\mathcal{T}_{P}$,  but in below we can show that there exist a post-fixed point $I'$ of $\mathcal{T}_{P}$, such that $I\subseteq I'$. As the theorem   assumes that the set of tree-atoms equivalent  to the  body of each $H_k$ is a subset of $\mathcal{M}$,  then by Theorem~\ref{thm: main soundess infinite}, $\{x_1,\ldots,x_m\}\subseteq \mathcal{M}$. Then by Lemma~\ref{lem: coin prf pcpl infi}, there exist a post-fixed point $I_x$ of $\mathcal{T}_{P}$, such that $\{x_1,\ldots,x_m\}\subseteq I_x$. Then let $I'=I \cup I_x$. It is easy to verify that $I'$ is a post-fixed point of $\mathcal{T}_{P}$ and $I\subseteq I'$. 
	\end{enumerate}		
	Now we have shown that a member of $\{I\mid I\subseteq\mathcal{T}^{+}_{P}\left(I\right)\}$ is either a member of $\{I\mid I\subseteq\mathcal{T}_{P}\left(I\right)\}$, or a subset of a member of the latter set. Therefore  $\mathcal{M}'\subseteq\mathcal{M}$. The proof is complete.
\end{proof}

\section{Examples of this Paper in Coq: Horn clauses as Coinductive Data Types}\label{sec:Coq}

In this section, we compare our four examples with alternative encoding in Coq.
Of interest is Coq's reliance on the notion of coinductive data types and checking 
guardedness of the proof terms. Both of these features are absent in the work we present.
Therefore, the coinductive principle is formulated  in this paper without relying on richer supporting infrastructure available in Coq.     

\subsection{Some minimal infrastructure}

\begin{lstlisting}
Variable A : Set.

CoInductive Stream : Set :=
    Cons : A -> Stream -> Stream.
\end{lstlisting}


Notation \lstinline?(* ... *)? below stands for commented out text in Coq.

We could, but do not, use Coq's inductive types \lstinline?Bool? or \lstinline?nat? in these examples, as none of our examples depends on induction on Booleans or natural numbers. 
Below, in order to mimic the minimalistic logic programming style, we define \lstinline?z, nil, O, I, s?  as arbitrary constants in a given set $A$.

\subsection{Example in cofohc}
Firstly, we give a Coq proof for the clause (6) and the goal $member(0, [0| nil])$  from Introduction.
 It uses a dummy function \lstinline?dummy_cons? for list constructor $[\_ | \_]$, to emphasize that no induction on the list structure is needed. It uses Coq's native definition of equality (in \lstinline? x = y?) for brevity.

Note that the clause (6) is defined as coinductive type, with constructor \lstinline? member_cons?.

\begin{lstlisting}
Variable dummy_cons: A -> A -> A.

CoInductive member:  A -> A -> Prop :=
  | member_cons: forall (x : A) (y : A) (t: A),  
	member x (dummy_cons y t) /\   x = y -> member x (dummy_cons y t).

Variable z : A.
Variable nil: A.

Lemma circular_member: member z (dummy_cons z nil).
(* coinductive hypothesis (member z (dummy_cons z nil )) taken: *)
cofix CH.
(* one step resolution with the Horn clause: *)
apply member_cons.
(* split for conjunction, application of coinductive hypothesis: *)
split; trivial; try apply CH.
Qed.

\end{lstlisting}

Coq infers (and checks) that the type \lstinline?member z (dummy_cons z nil)? is inhabited by the proof term
\lstinline? member_cons (conj CH eq_refl)?, and that this term is guarded by the coinductive constructor \lstinline? member_cons?.
Thus type checking together with guardedness gives soundness to the \lstinline?cofix? rule.

Similar construction of proof terms happens in the remaining examples of this section.

We cannot have a similar guardedness checks in coinductive uniform proofs, as we have no proof terms. Thus, the above guardedness check is replaced by rules of Figure~\ref{fig: rules with angles}.

\subsection{Example in cohohc}

For clauses (3)-(5)  in Introduction:

\begin{lstlisting}
Variable O : A.
Variable I : A. 

Inductive bit : A -> Prop :=
| O_cons: bit O
| I_cons: bit I. 

CoInductive bitstream:   Stream -> Prop :=
  | bitstream_cons: forall (x : A) (t: Stream), 
	bitstream t /\   bit (x) ->  bitstream (Cons x  t).

CoFixpoint z_str : Stream :=
Cons O z_str.
\end{lstlisting}

Note the below term \lstinline?z_str? of type \lstinline?Stream? is guarded by constructor \lstinline?Cons?. Definition~\ref{defn: guarded fpt} insures this kind of guardedness in coinductive uniform proofs.

The proof below is then similar to the one we explained above: 

\begin{lstlisting}
Lemma bistreamO: bitstream z_str.
(* coinductive hypothesis (bitstream z_str) taken: *)
cofix CH.
(* forcing one-step stream reduction: *)
rewrite Stream_decomposition_lemma; simpl.
(* one step resolution with the Horn clause: *)
apply bitstream_cons.
(* split for conjunction, application of coinductive hypothesis: *)
split; try apply CH.
apply O_cons.
Qed.



\end{lstlisting}

\subsection{Example in cofohh}

For clause (9)  in Introduction:

\begin{lstlisting}
Variable f: Stream -> Stream.

CoInductive com_bit: A ->  Stream -> Prop :=
  | com_bit_cons: forall (x : A) (t: Stream),  
	com_bit x (f t) /\   bit (x) -> com_bit  x  t.

Lemma com_bit_lem: forall x y, bit x -> com_bit x y.
(* coinductive hypothesis (forall x y, bit x -> com_bit x y) taken: *)
cofix CH.
intros.
(* one step resolution with the Horn clause: *)
apply com_bit_cons.
(* split for conjunction, application of coinductive hypothesis: *)
split; try apply CH; assumption.
Qed.
\end{lstlisting}

Using coinductive lemma in other proofs:

\begin{lstlisting}
Variable t: Stream.

Lemma com_bit_O: com_bit O t.
apply com_bit_lem; apply O_cons.
Qed.

\end{lstlisting}

\subsection{Example in cohohh}

For clause (8) in Introduction:

\begin{lstlisting}

Variable s: A -> A.

CoInductive from:  A -> Stream -> Prop :=
  | from_cons: forall (x : A) (y: Stream),  
	from (s x) y  -> from x (Cons x y).


CoFixpoint fromstr (x : A ) : Stream :=
Cons x (fromstr (s x)).

Lemma fromQ: forall x, from x (fromstr x).
(* coinductive hypothesis (forall x, from x (fromstr x)) taken: *)
cofix CH.
intros.
(* forcing one-step stream reduction: *)
rewrite Stream_decomposition_lemma; simpl.
(* one step resolution with the Horn clause: *)
apply from_cons.
(* application of coinductive hypothesis: *)
apply CH.
Qed.
\end{lstlisting}

Using coinductive lemma in other proofs:

\begin{lstlisting}

Lemma from0: exists z, from O z.
exists (fromstr O).
apply fromQ.
Qed.
\end{lstlisting}
\end{document}